\Crefname{algocf}{Algorithm}{Algorithms}
\crefname{algocfline}{line}{lines}
\newcommand{\X}{\mathcal{X}}
\newcommand{\veps}{\varepsilon}
\newtheorem{defn}[theorem]{Definition}
\newtheorem{clm}{Claim}
\Crefname{clm}{Claim}{Claims}
\crefname{clm}{Claim}{Claims}
\newcommand{\C}{\mathcal{C}}
\newcommand{\A}{\mathcal{A}}
\newcommand{\M}{\mathcal{M}}
\newcommand{\cI}{{\mathcal I}}
\newcommand{\pr}{\mathbf{Pr}}
\newcommand{\kmed}{{\textsf{$k$-median}}\xspace}
\newcommand{\kmean}{{\textsf{$k$-means}}\xspace}
\newcommand{\rgath}{{\textsf{$r$-gathering}}\xspace}
\newcommand{\chck}{{\textsf{check}}}
\newcommand{\opt}{{\textsf{opt}}}
\newcommand{\cost}{{\textsf{cost}}}
\newcommand{\lmatch}{{\textsf{$b$-matching} }}
\newcommand{\btau}{{\boldsymbol{\tau}}}
\newcommand{\Xopt}{X^{\opt}}
\newcommand{\fopt}{f^{\opt}}
\newcommand{\hatX}{{\widehat X}}
\newcommand{\hatt}{{\hat t}}
\newcommand{\cO}{{\cal O}}
\newcommand{\hmu}{{\widehat{\mu}}}
\definecolor{mycolor}{rgb}{1, 0.0, 0.0}
\newcommand{\dishant}[1]{\textcolor{mycolor}{}}
\begin{document}

\title{Clustering What Matters in Constrained Settings}
\subtitle{(Improved Outlier to Outlier-Free Reductions)}
%
%
\author{Ragesh Jaiswal and Amit Kumar}
%
%
%
\institute{Department of Computer Science and Engineering, Indian Institute of Technology Delhi.\thanks{\email{\{rjaiswal, amitk\}@cse.iitd.ac.in}}}
{\def\addcontentsline#1#2#3{}\maketitle}

\begin{abstract}
Constrained clustering problems generalize classical clustering formulations, e.g., \kmed, \kmean, by imposing additional constraints on the feasibility of a clustering. There has been significant recent progress in obtaining approximation algorithms for these problems, both in the metric and the Euclidean settings. However, the outlier version of these problems, where the solution is allowed to leave out $m$ points from the clustering, is not well understood. In this work, we give a general framework for reducing the outlier version of a constrained \kmed or \kmean problem to the corresponding outlier-free version with only $(1+\varepsilon)$-loss in the approximation ratio. The reduction is obtained by mapping the original instance of the problem to $f(k,m, \varepsilon)$ instances of the outlier-free version, where $f(k, m, \veps) = \left( \frac{k+m}{\veps}\right)^{O(m)}$. 
As specific applications, we get the following results:
\begin{itemize}
    \item First FPT ({\it in the parameters $k$ and $m$}) $(1+\varepsilon)$-approximation algorithm for the outlier version of capacitated \kmed and \kmean in Euclidean spaces with {\it hard} capacities. 
    \item First FPT ({\it in the parameters $k$ and $m$}) $(3+\veps)$ and $(9+\veps)$ approximation algorithms for the outlier version of capacitated \kmed and \kmean, respectively,  in general metric spaces with {\it hard} capacities.
    \item First FPT ({\it in the parameters $k$ and $m$}) $(2-\delta)$-approximation algorithm for the outlier version of the \kmed problem under the Ulam metric.
\end{itemize}

Our work generalizes the results of Bhattacharya et al. 
and  Agrawal et al.
to a larger class of constrained clustering problems. Further, our reduction works for arbitrary metric spaces and so can extend clustering algorithms for outlier-free versions in both Euclidean and arbitrary metric spaces. 
\end{abstract}

\section{Introduction}
Center-based clustering problems such as $k$-median and the $k$-means are important data processing tasks. Given a metric $D$ on a set of $n$ points $\X$ and a parameter $k$, the goal here is to partition the set of points into $k$ {\it clusters}, say $C_1, \ldots, C_k$,  and assign the points in each cluster to a corresponding {\it cluster center}, say $c_1, \ldots, c_k$, respectively,  such that the objective $\sum_{i=1}^k \sum_{x \in C_i} D(x,c_i)^z$ is minimized. Here $z$ is a parameter which is 1 for $\kmed$ and 2 for $\kmean$. The {\it outlier} version of these problems is specified by another parameter $m$, where a solution is allowed to leave out up to $m$ points from the clusters. Outlier versions capture settings where the input may contain a few highly erroneous data points. Both the outlier and the outlier-free versions have been well-studied in the literature with constant factor approximations known for both the $\kmean$ and the $\kmed$ problem~\cite{Svensson17,naveen02,charikar99}. In addition, fixed-parameter tractable (FPT)  $(1+\varepsilon)$-approximation algorithms are known for these problems in the Euclidean setting~\cite{kumar02,feldman07,bgjk20}: the running time of such algorithms is of the form $f(k,m,\varepsilon) \cdot poly(n,d)$, where $f()$ is an exponential function of the parameters $k,m,\varepsilon$ and $d$ denotes the dimensionality of the points.

A more recent development in clustering problems has been the notion of {\it constrained clustering}. A constrained clustering problem specifies additional conditions on a feasible partitioning of the input points into $k$ clusters. For example, the $\rgath$ problem requires that each cluster in a feasible partitioning must contain at least $r$ data points. Similarly, the well-known {\it capacitated} clustering problem specifies an upper bound on the size of each cluster.  Constrained clustering formulations can also capture various types of {\it fairness} constraints: each data point has a {\it label} assigned to it, and we may require upper or lower bounds on the number (or fraction) of points with a certain label in each cluster.~\Cref{table:1} in the Appendix gives a list of some of these problems.  FPT (in the parameter $k$) constant factor approximation algorithms are known for a large class of these problems (see~\Cref{table:2} in the Appendix).

It is worth noting that constrained clustering problems are distinct from outlier clustering: the former restricts the set of feasible partitioning of input points, whereas the latter allows us to reduce the set of points that need to be partitioned into clusters. 
There has not been much progress on constrained clustering problems in the outlier setting (also see~\cite{outlier:kmeans_2018_Ravishankar} for 
unbounded integrality gap for the natural LP relaxation for the outlier clustering versions).
In this work, we bridge this gap between the outlier and the outlier-free versions of constrained clustering problems by giving an {\it almost approximation-preserving} reduction from
the former to the latter. As long as the parameters of interest (i.e., $k,m$) are small, the reduction works in polynomial time. Using our reduction, an FPT $\alpha$-approximation algorithm for the outlier-free version of a constrained clustering problem leads to an FPT $(\alpha+\varepsilon)$-approximation algorithm for the outlier version of the same problem. For general metric spaces, this implies the first FPT constant-approximation for outlier versions of several constrained clustering problems; and similarly, we get new FPT $(1+\varepsilon)$-approximation algorithms for several outlier constrained clustering problems --see~\Cref{table:2} in the Appendix for the precise details. 

This kind of FPT approximation preserving reduction in the context of Euclidean $k$-means was first given by~\cite{bgjk20} using a sampling-based approach. 
~\cite{gjk20} extended the sampling ideas of~\cite{bgjk20} to general metric spaces but did not give an approximation-preserving reduction.
~\cite{aisx23} gave a reduction for general metric spaces using a coreset construction. 
In this work, we use the sampling-based ideas of \cite{bgjk20} to obtain an approximation-preserving reduction from the outlier version to the outlier-free version with improved parameters over~\cite{aisx23}. Moreover, our reduction works for most known constrained clustering settings as well. 

\subsection{Preliminaries}
We give a general definition of a constrained clustering problem.  For a positive integer $k$, we shall use $[k]$ to denote the set $\{1, \ldots, k\}$. 
Let $(\X, D)$ denote the metric space with distance function $D$. For a point $x$ and a subset $S$ of points, we shall use $D(x,S)$ to denote $\min_{y \in S} D(x,y)$. 
The set $\X$  contains subsets $F$ and $X$: here $X$ denotes the set of input points and $F$ is the set of points where a center can be located. An outlier constrained clustering problem is specified by the following parameters and functions:
\begin{itemize}
    \item $k$: the number of clusters. 
    \item $m$: the number of points which can be left out from the clusters. 
    \item a function $\chck$: given a partitioning $X_0, X_1, \ldots, X_k$ of $X$ (here $X_0$ is the set of outliers) and centers $f_1, \ldots, f_k$, each lying in the set $F$, the function $\chck(X_0, X_1, \ldots, X_k, f_1, \ldots, f_k)$ outputs $1$ iff this is a feasible clustering. For example, in the $\rgath$ problem, the $\chck(X_0, X_1, \ldots, X_k, f_1, \ldots, f_k)$ outputs $1$ iff $|X_i| \geq r$ for each $i \in [k]$.  The $\chck$ function depends only on the cardinality of the sets $X_1, \ldots, X_k$ and the locations $f_1, \ldots, f_k$. This already captures many of the constrained clustering problems. Our framework also applies to the more general labelled version (see details below). 
    \item a cost function $\cost$: given a partitioning $X_0, X_1, \ldots, X_k$ of $X$ and centers $f_1, \ldots, f_k$, $$\cost(X_0, X_1, \ldots, X_k, f_1, \ldots, f_k) := \sum_{i \in [k]} \sum_{x \in X_i} D^z(x,f_i), $$
    where $z$ is either 1 (the outlier constrained $\kmed$ problem) or 2 (the outlier constrained $\kmean$ problem). 
\end{itemize} 
Given an instance $\cI = (X, F, k, m, \chck, \cost)$ of an outlier constrained clustering problem as above, the goal is to find a partitioning $X_0, X_1, \ldots, X_k$ of $X$ and centers $f_1, \ldots, f_k \in F$ such that $|X_0| \leq m$,  \\
$\chck(X_0, X_1,\ldots, X_k, f_1, \ldots, f_k)$ is 1 and 
$\cost(X_0, X_1, \ldots, X_k, f_1, \ldots, f_k)$ is minimized. The outlier-free constrained clustering problem is specified as above, except that the parameter $m$ is 0. For the sake of brevity, we leave out the parameter $m$ and the set $X_0$ while defining the instance $\cI$, and functions $\chck$ and $\cost$.

We shall also consider a more general class of constrained clustering problems, where each input point is assigned a {\it label}. In other words, an instance $\cI$ of such a problem is specified by a tuple $(X,F,k,m, \sigma, \chck, \cost)$, where $\sigma: X \rightarrow L$ for a finite set $L$. Note that the $\chck$ function may depend on the function $\sigma$. For example, $\sigma$ could assign a label ``red'' or ``blue'' to each point in $X$ and the $\chck$ function would require that each cluster $X_i$ should have an equal number of red and blue points. In addition to the locations $f_1, \ldots, f_k$, the $\chck(X_1, \ldots, X_k, f_1, \ldots, f_k, \sigma)$ function also depends on $|\sigma^{-1}(l) \cap X_j|$ for each $l \in L, j \in [k]$, i.e., the number of points with a particular label in each of the clusters. Indirectly, this also implies that the $\chck$ function can impose conditions on the labels of the outliers points. For example, 
 the colorful $k$-median problem discussed in \cite{aisx23} has the constraint that $m_i$ clients from the label type $i$ should be designated as outliers, given that every client has a unique label. 
~\Cref{table:1} in the Appendix gives a description of some of these problems.

We shall use the approximate triangle inequality, which states that for $z \in \{1,2\}$ and any three points $x_1, x_2, x_3 \in \X$, 
\begin{align}
    \label{eq:tr}
    D^z(x_1, x_3) \leq z \left( D^z(x_1, x_2) + D^z(x_2, x_3) \right).
\end{align}

\subsection{Our results}
Our main result reduces the outlier constrained clustering problem to the outlier-free version. In our reduction, we shall also use approximation algorithms for the (unconstrained) $\kmed$ and $\kmean$ problems. We assume we have a constant factor approximation algorithm for these problems\footnote{Several such constant factor approximation algorithms exist~\cite{Svensson17,naveen02,charikar99}.}: let $\C$ denote such an algorithm with running time $T_{\C}(n)$ on an input of size $n$. Note that $\C$ would be an algorithm for the $\kmean$ or the $\kmed$ problem depending on whether $z=1$ or $2$ in the definition of the $\cost$ function. 

\begin{theorem}[Main Theorem]\label{thm:main}
Consider an instance $\cI=(X,F,k,m,\chck, \cost)$ of an outlier constrained clustering problem. Let $\A$ be an $\alpha$-approximation algorithm for the corresponding outlier-free constrained clustering problem; let $T_{\A}(n)$ be the running time of $\A$ on an input of size $n$.  Given a positive $\varepsilon > 0$, there is an $\alpha(1+\varepsilon)$-approximation algorithm for $\cI$ with running time 
$T_{\C}(n)  + q \cdot T_{\A}(n) + O \left( n \cdot (k + \frac{m^{z+1} \log{m}}{\veps^z})\right) + O \left( qm^2(k+m)^3\right)$,
where $n$ is the size of $\cI$ and $q = f(k, m, \veps) = \left( \frac{k+m}{\veps}\right)^{O(m)}$, and $z=1$ or $2$ depending on the $\cost$ function (i.e., $z=1$ for $\kmed$ objection and $z=2$ for $\kmean$ objective).
\end{theorem}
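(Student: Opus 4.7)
The plan is to reduce the outlier instance $\cI$ to $q=\left(\tfrac{k+m}{\veps}\right)^{O(m)}$ outlier-free instances, invoke $\A$ on each, and return the best solution found. The reduction has three stages: localize outliers using a constant-factor approximation to unconstrained $(k+m)$-clustering, enumerate $m$-subsets of a small candidate pool as putative outlier sets, and solve each residual outlier-free instance using $\A$.

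\textbf{Construction.} I first run $\C$ on $X$ for the unconstrained $(k+m)$-clustering problem (with exponent $z$) to obtain a set $\widehat S$ of $k+m$ centers. The key observation is that the optimal outlier constrained solution---consisting of the $k$ optimal centers together with the $m$ outlier locations---is itself a valid $(k+m)$-center solution of cost at most $\cost(\opt)$; hence $\sum_{x \in X} D^z(x, \widehat S) = O(\cost(\opt))$ by the approximation guarantee of $\C$. Next, for each $s \in \widehat S$, I let $P_s$ be the top $t := \Theta(m^{z+1}\log m / \veps^z)$ points in the Voronoi cell of $s$ ordered by decreasing $D^z(\cdot, s)$, and set $P := \bigcup_{s \in \widehat S} P_s$, so $|P|=O((k+m)t)$. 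Finally, I enumerate all $m$-subsets $\widehat X_0 \subseteq P$; this yields $q$ candidate configurations. For each such $\widehat X_0$, I invoke $\A$ on the residual outlier-free instance $(X \setminus \widehat X_0, F, k, \chck, \cost)$, append $\widehat X_0$ as the outlier set (contributing zero to the objective), and retain the minimum-cost overall solution.

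\textbf{Correctness.} The heart of the analysis is exhibiting a configuration $\widehat X_0^\diamond$ in the enumeration such that $X \setminus \widehat X_0^\diamond$ admits a feasible constrained clustering of cost at most $(1+\veps)\cost(\opt)$; applying $\A$'s $\alpha$-approximation on this residual then yields the desired $\alpha(1+\veps)$ bound. To build $\widehat X_0^\diamond$, I partition the optimal outlier set $X_0^*$ into \emph{expensive} outliers $E$ (those with $D^z(x, \widehat S)$ exceeding a threshold $\Theta(\veps\cost(\opt)/m)$ up to log factors) and \emph{cheap} outliers. A pigeonhole argument using the overall bound on $\sum_{x \in X} D^z(x, \widehat S)$ shows $E \subseteq P$, because any expensive point must lie among the top-$t$ of its own Voronoi cell. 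Setting $\widehat X_0^\diamond$ to be $E$ padded to size exactly $m$ with carefully chosen points in $P$, I construct a feasible clustering of $X \setminus \widehat X_0^\diamond$ by starting from OPT and swapping: for each $\widehat x \in \widehat X_0^\diamond \setminus X_0^*$ originally assigned to OPT cluster $C_j^*$, swap it with a matched cheap outlier $x^* \in X_0^* \setminus \widehat X_0^\diamond$ and reassign $x^*$ to cluster $j$. Using the approximate triangle inequality \eqref{eq:tr}, the cost increase is bounded by $O(1) \cdot \sum_{x^* \text{ cheap}} D^z(x^*, \widehat S) = O(\veps\cost(\opt))$ after a suitable rescaling of $\veps$.

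\textbf{Main obstacle.} The principal difficulty is ensuring that the swap preserves the $\chck$ predicate under arbitrary size and label constraints. I expect to resolve this via a carefully constructed bipartite matching that aligns cheap outliers to swap-in candidates so that both cluster sizes and label-class counts are preserved; solving such a matching per configuration accounts for the $O(qm^2(k+m)^3)$ term in the running time. For labelled variants with many label types, the matching may require an additional enumeration over outlier label profiles---this overhead can be absorbed into $q=\left(\tfrac{k+m}{\veps}\right)^{O(m)}$ since at most $m$ outliers need labelling and the candidate pool has size polynomial in $(k+m)/\veps$. A tight accounting of the swap cost via \eqref{eq:tr} combined with the pigeonhole bound on cheap outliers then delivers the final $\alpha(1+\veps)$ approximation ratio.
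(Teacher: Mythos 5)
Your proposal follows the same high-level roadmap as the paper — run a constant-factor $(k+m)$-clustering algorithm to obtain $k+m$ anchor centers, split the optimal outlier set into "far" (expensive) and "near" (cheap) pieces, capture the far ones combinatorially, and handle the near ones by swapping them with nearby substitutes — and your deterministic "top-$t$ per Voronoi cell" pigeonhole is a valid replacement for the paper's randomized $D^z$-sampling of the far outliers. However, there is a concrete gap in how you treat the near outliers.

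You set $P_s$ to be the $t$ points \emph{farthest} from $s$ in $s$'s Voronoi cell, and you propose to pad $E$ with points from $P$. The pigeonhole argument shows $E \subseteq P$, so the far points are right; but the swap analysis requires the opposite property from the padding points. When you swap a padding point $\widehat x \in \widehat X_0^\diamond \setminus X_0^*$ (originally in optimal cluster $C_j^*$) against a cheap outlier $x^*$ and reassign $x^*$ to cluster $j$, the net cost change is bounded via
$D(x^*, f_j^{\opt}) \le D(x^*, c_i) + D(c_i, \widehat x) + D(\widehat x, f_j^{\opt})$,
where $c_i$ is the anchor center nearest $x^*$. The first term is $D(x^*, \widehat S)$, which is small because $x^*$ is cheap, and the third term is charged to $\opt$. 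But the middle term $D(c_i, \widehat x)$ is \emph{not} controlled: every $\widehat x \in P$ is far from its own Voronoi center, hence far from every anchor center (else it would lie in another cell), so $D(c_i, \widehat x)$ can be arbitrarily large. Concretely, if all but $t$ points in $c_i$'s cell are at distance $\veps$ and the top $t$ are at distance $100$, a cheap outlier at distance $\veps$ cannot be swapped against any padding candidate from $P$ at bounded cost. Thus the claimed $O(1)\cdot\sum_{x^*\text{ cheap}} D^z(x^*, \widehat S)$ bound does not follow.

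The paper avoids pre-committing to a fixed candidate pool for the near side. It enumerates only the \emph{cardinality pattern} $(t_1,\dots,t_{k+m})$ of near outliers per Voronoi cell, and for each pattern solves an exact minimum-cost $b$-matching over all of $X$ (in practice over the $m$ closest points to each anchor, which suffices). Since the matching minimizes $\sum D^z(\cdot, c_j)$ subject to the prescribed counts, the selected "pseudo near outliers" $\hatX_j$ are automatically close to their anchors $c_j$; this is what makes the displacement bijection $\mu$ (\Cref{lem:mu}) and its fixed-point refinement $\hmu$ (\Cref{lem:hmu}) have small total cost, and then \Cref{lem:close} carries out exactly the swap you sketch — with the size-preservation following for free since $\hmu$ is a bijection, and the labelled version handled by additionally enumerating a label profile per $t_j$. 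To repair your approach, $P$ should be enlarged to also include the $m$ \emph{nearest} points to each anchor, and even then, because the right padding depends on the pattern $(t_1,\dots,t_{k+m})$, you would still need the matching step rather than raw $m$-subset enumeration.
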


The above theorem implies that as long as there is an FPT or polynomial-time approximation algorithm for the constrained, outlier-free $\kmed$ or $\kmean$ clustering problem, there is an FPT approximation algorithm (with almost the same approximation ratio) for the corresponding outlier version. We prove this result by creating $q$ instances of the outlier-free version of $\cI$ and picking the best solution on these instances using the algorithm $\A$. 
We also extend the above result to the labelled version: 

\begin{theorem}[Main Theorem: labelled version]\label{thm:main-labelled}
Consider an instance $\cI=(X,F,k,m, \sigma, \chck, \cost)$ of an outlier constrained clustering problem with labels on input points. Let $\A$ be an $\alpha$-approximation algorithm for the corresponding outlier-free constrained clustering problem; let $T_{\A}(n)$ be the running time of $\A$ on an input of size $n$.  Given a positive $\varepsilon > 0$, there is an $\alpha(1+\varepsilon)$-approximation algorithm for $\cI$ with running time 
$T_{\C}(n)  + q \cdot T_{\A}(n) + O \left( n \cdot (k + \frac{m^{z+1} \log{m}}{\veps^z})\right) + O \left( q \ell m^2(k+m)^3\right)$,
where $n$ is the size of $\cI$, $q = f(k, m, \veps) = \left( \frac{(k+m)\ell}{\veps}\right)^{O(m)}$ with $\ell$ being the number of distinct labels, and $z=1$ or $2$ depending on the $\cost$ function (i.e., $z=1$ for $\kmed$ objection and $z=2$ for $\kmean$ objective). 
\end{theorem}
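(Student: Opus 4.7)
I would prove this by lifting the reduction of Theorem~\ref{thm:main} to handle labels. Recall that the strategy for Theorem~\ref{thm:main} is: invoke $\C$ to obtain a constant-factor unconstrained solution, use the $k$ resulting centers and their approximate radii to partition the input points into $O\bigl(\tfrac{(k+m)\log m}{\veps}\bigr)$ distance strata, and then enumerate over ``outlier profiles'' describing, for each of the $m$ outliers, which stratum it comes from. Each profile yields a candidate outlier set $S\subseteq X$ of size $m$, and on $X\setminus S$ we invoke the outlier-free constrained solver $\A$. When the enumerated profile matches the one induced by $X_0^{\opt}$, the approximate triangle inequality~(\ref{eq:tr}) yields a feasible outlier-free solution of cost at most $(1+\veps)\cdot\opt(\cI)$, and $\A$ multiplies this by $\alpha$.

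\textbf{Adding labels.} To extend this to the labelled setting I would augment each outlier profile with a \emph{label histogram}: a tuple $(n_1,\dots,n_\ell)$ with $\sum_{l} n_l=m$, where $n_l$ prescribes the number of outliers carrying label $l$. There are at most $\binom{m+\ell-1}{\ell-1}\leq(m+\ell)^{m}$ such histograms, which inflates the enumeration of Theorem~\ref{thm:main} by an $\ell^{O(m)}$ factor and accounts for the bound $q=\left(\tfrac{(k+m)\ell}{\veps}\right)^{O(m)}$ in the statement. For a profile/histogram pair I select a candidate outlier set $S$ whose points lie in the prescribed strata and whose label multiset matches $(n_1,\dots,n_\ell)$, and then invoke $\A$ on $\cI'=(X\setminus S,\,F,\,k,\,\sigma|_{X\setminus S},\,\chck,\,\cost)$. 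Crucially, no modification of $\chck$ is needed, since it depends on its input only through cluster cardinalities, the chosen centers, and per-cluster label counts, all of which are well-defined on any subset of $X$.

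\textbf{Correctness and main obstacle.} To show some pair yields an $\alpha(1+\veps)$-approximation, I would take the profile from Theorem~\ref{thm:main}'s argument paired with the histogram $n_l^\star=|\sigma^{-1}(l)\cap X_0^{\opt}|$ and construct a feasible outlier-free solution on $X\setminus S$ as follows: start from $(X_1^{\opt},\dots,X_k^{\opt})$ and, within each label class $l\in L$, bijectively pair points of $S\setminus X_0^{\opt}$ (newly declared outliers) with points of $X_0^{\opt}\setminus S$ (newly clustered), moving each latter point into the cluster vacated by its matched partner. The equality $|S\cap\sigma^{-1}(l)|=|X_0^{\opt}\cap\sigma^{-1}(l)|=n_l^\star$ makes this bijection well-defined and preserves both per-cluster cardinalities and per-cluster label counts, so $\chck$ remains satisfied. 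The main obstacle is controlling the cost of this swap: one must bound $D^z(b,f_j^{\opt})$ for each newly clustered $b\in X_0^{\opt}\setminus S$ in terms of the $\opt$-cost of its matched partner $a\in S\setminus X_0^{\opt}$ within the same stratum. This is exactly where the stratification argument of Theorem~\ref{thm:main} is reused, and since labels do not interact with the metric the same $(1+\veps)$-loss bound carries over. The running time bound follows by inflating $q$ by $\ell^{O(m)}$ and adding a factor of $\ell$ to the $O(qm^2(k+m)^3)$ bookkeeping term, which now maintains label histograms across clusters.
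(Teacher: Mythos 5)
Your labelled extension has a genuine gap: you only enumerate a \emph{global} label histogram $(n_1,\dots,n_\ell)$ of the $m$ outliers. That suffices to make the swap \emph{feasible} (per-cluster cardinalities and label counts are preserved), but not to make it \emph{cheap}. The cost you must control is $\sum_{x} D^z(x,\hmu(x))$ over the pairs of your bijection, and bounding it by $O(\veps)\cdot\opt(\cI)$ requires each mis-guessed optimal outlier to be paired with a proxy lying close to the \emph{same} center of $C$ (the same stratum, in your language). Your bijection is built "within each label class", and nothing in your enumeration forces the joint (center, label) composition of the candidate outlier set to agree with that of $\Xopt_0$: a label-preserving partner may sit near a completely different center, so $D^z(x,\hmu(x))$ picks up an uncontrolled inter-center distance and the claim that "the same $(1+\veps)$-loss bound carries over" does not follow. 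This is precisely why the paper guesses, for \emph{every} center $c_j\in C$, a label partition $\psi_j$ of the number $t_j$ of near outliers assigned to $c_j$, and enforces these per-center, per-label counts inside the min-cost \lmatch instance; only then can the analysis bijection be chosen simultaneously label-preserving and center-preserving, which is what the analogues of Lemma~\ref{lem:mu} and Lemma~\ref{lem:hmu} need. (The finer enumeration is still only an $\ell^{O(m)}$ overhead since $\sum_j t_j\le m$, so it is consistent with the stated $q=\left(\frac{(k+m)\ell}{\veps}\right)^{O(m)}$; the issue is correctness, not counting.)

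Separately, the base reduction you recall is not the paper's, and as described it has its own hole. The proof of Theorem~\ref{thm:main} does not partition points into distance strata and enumerate strata profiles: it runs $\C$ with $k+m$ centers, $D^z$-samples $O(\beta m\log m/\veps)$ points so that every "far" optimal outlier (those with $D^z(x,C)\ge \veps\,\cost_{\cI'}(C)/(2\beta m)$) lands in the sample with high probability, guesses the far outliers \emph{exactly} by enumerating subsets of the sample, and only replaces the "near" outliers by proxies obtained from a min-cost $b$-matching to $C$ after guessing the counts $t_j$. A far outlier cannot be replaced by another point of the same stratum: two points at comparable distance from $C$ can be arbitrarily far from each other, and failing to discard a single far optimal outlier can by itself cost $\Theta(\opt(\cI))$ in the reassignment. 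So even within your strata formulation you still need the sampling-plus-exact-guessing mechanism for far outliers (where labels come for free, since those points are identified exactly), and the per-center label partitions for the near ones.
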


The algorithms given in Theorem~\ref{thm:main} and Theorem~\ref{thm:main-labelled} are randomized algorithms that guarantee the stated approximation factor with high probability.
The consequences of our results for specific constrained clustering problems are summarized in~\Cref{table:2} in the Appendix. We give the results of related works~\cite{bgjk20,gjk20,aisx23} in the same table to see the contributions of this work. Our contributions can be divided into two main categories:
\begin{enumerate}
\item {\it Matching the best-known result}: This can be further divided into two categories: 
\begin{enumerate}
\item {\it Matching results of \cite{aisx23}}: \cite{aisx23} gives an outlier to outlier-free reduction. We also give such a reduction using a different technique with better parameters. This means that we match all the results of \cite{aisx23}, which includes problems such as the classical $k$-median/means problems, the Matroid $k$-median problem, the colorful $k$-median problem, and $k$-median in certain special metrics. See rows 2-6 in~\Cref{table:2} given in the Appendix.
\item  {\it Matching results of \cite{gjk20}}: \cite{gjk20} gives FPT approximation algorithms for certain constrained problems on which the coreset-based approach of \cite{aisx23} is not known to work. See the last row of~\Cref{table:2}. \cite{gjk20} gives algorithms for outlier and outlier-free versions with the same approximation guarantee. 
Since the best outlier-free approximation is also from \cite{gjk20}, our results currently only match the approximation guarantees of \cite{gjk20}. However, if there is an improvement in any of these problems, our results will immediately beat the known outlier results of \cite{gjk20}.
\end{enumerate}
\item {\it Best known results}: Since our results hold for a larger class of constrained problems than earlier works, there are certain problems for which our results give the best-known FPT approximation algorithm. The list includes capacitated \kmed/\kmean with hard capacities in general metric and Euclidean spaces. It also includes the $k$-median problem in the Ulam metric. A recent development in the Ulam $\kmed$ problem~\cite{cdk23} has broken the $2$-approximation barrier. Our reduction allows us to take this development to the outlier setting as well. 
The outlier-free results from which our best results are derived using our reduction are given in~\Cref{table:2} (see rows 7-9) given in the Appendix.
\end{enumerate}

\subsection{Comparison with earlier work}
As discussed earlier, the idea of a reduction from an outlier clustering problem to the corresponding outlier-free version in the context of the Euclidean \kmean problem was suggested by~\cite{bgjk20} using a $D^2$-sampling based idea. 
\cite{gjk20} used the sampling ideas to design approximation algorithms for the outlier versions of various constrained clustering problems. However, the approximation guarantee obtained by~\cite{gjk20} was limited to $(3+\veps)$ for a large class of constrained \kmed and $(9+\veps)$ for the constrained \kmean problems, and it was not clear how to extend these techniques to get improved guarantees. 
As a result, their techniques could not 
 exploit the recent developments by \cite{vincent_hardness_2019} in the design of $(1+2/e+\veps)$ and $(1+8/e+\veps)$ FPT approximation algorithms for the classical outlier-free \kmed and \kmean problems respectively in general metric spaces.
\cite{aisx23} gave an outlier-to-outlier-free reduction, making it possible to extend the above-mentioned FPT approximation guarantees for the outlier-free setting to the outlier setting.

The reduction of \cite{aisx23} is based on the coreset construction by \cite{chen09} using uniform sampling. A coreset for a dataset is a weighted set of points such that the clustering of the coreset points with respect to any set of $k$ centers is the same (within a $1\pm \veps$ factor) as that of the original set points. 
The coreset construction in~\cite{chen09} starts with a set $C$ of centers that give constant factor approximation. 
They consider $O(\log{n})$ ``ring'' around these centers, uniformly sample points from each of these rings, and set the weight of the sampled points appropriately. The number of sampled points, and hence the size of the coreset, is $\left( \frac{|C| \log{n}}{\veps} \right)^2$. 
\cite{aisx23} showed that when starting with $(k+m)$ centers that give a constant approximation to the classical $(k+m)$-median problem, the coreset obtained as above has the following additional property:  for any set of $k$ centers, the clustering cost of the original set of points excluding $m$ outliers is  same (again, within $1\pm \veps$ factor) as that of the coreset, again allowing for exclusion of a subset of $m$ points from it. 
This means that by trying out all $m$ subsets from the coreset, we ensure that at least one subset acts as a good outlier set. Since the coreset size is $\left( \frac{(k+m)\log{n}}{\veps}\right)^2$, the number of outlier-free instances that we construct is $\left( \frac{(k+m)\log{n}}{\veps}\right)^{O(m)}$. Using $(\log{n})^{O(m)} = \max\{m^{O(m)}, n^{O(1)}\}$, this is of the form $f(k, m, \veps) \cdot n^{O(1)}$ for a suitable function $f$.  
At this point, we note the first quantitative difference from our result. 
In our algorithm, we save the $(\log{n})^{O(m)}$ factor, which also means that the number of instances does not depend on the problem size $n$. 
Further, a coreset-based construction restricts the kind of problems it can be applied to.
The coreset property that the cost of original points is the same as that of the  weighted cost of coreset points holds when points are assigned to the closest center ({\it i.e., the entire weight of the coreset goes to the closest center}).\footnote{The reason is how Haussler's lemma is applied to bound the cost difference.} 
This works for the classical unconstrained \kmed and \kmean problems (as well as the few other settings considered in \cite{aisx23}).
However, for several constrained clustering problems, it may not hold that every point is assigned to the closest center. 
There have been some recent developments~\cite{bfs21,braverman22} in designing coresets for constrained clustering settings. 
However, they have not been shown to apply to the outlier setting.
Another recent work~\cite{huang22} designs coresets for the outlier setting, but like \cite{aisx23}, it has limited scope and has not been shown to extend for most constrained settings.
Our $D^z$-sampling-based technique has the advantage that instead of running the outlier-free algorithm on a coreset as in \cite{aisx23}, it works directly with the dataset. 
That is, we run the outlier-free algorithm on the dataset (after removing outlier candidates). 
This also makes our results helpful in weighted settings (e.g., see \cite{cdk23}) where the outlier-free algorithm is known to work only for unweighted datasets -- note a coreset is a weighted set).

\noindent
{\bf Recent independent work}: In recent and independent work,~\cite{dabas2023fpt} design similar approximation preserving reductions for a restricted class of constrained clustering settings, namely capacitated clustering and $(\alpha, \beta)$-fair clustering. Further, their results are obtained by extending coreset based ideas of ~\cite{aisx23}.

\subsection{Our Techniques}
In this section, we give a high-level description of our algorithm. Let $\cI$ denote an instance of outlier constrained clustering on a set of points $X$ and $\cO$ denote an optimal solution to $\cI$. The first observation is that the optimal cost of the outlier-free and unconstrained clustering with $k+m$ centers on $X$ is a lower bound on the cost of $\cO$ (\Cref{cl:1}).
\footnote{This observation was used in both \cite{bgjk20} and \cite{aisx23}.} 
Let $C$ denote the set of these $(k+m)$ centers (we can use any constant factor approximation for the unconstrained version to find $C$). The intuition behind choosing $C$ is that the centers in $\cO$ should be close to $C$. 

Now we divide the set of $m$ outliers in $\cO$ into two subsets: those which are far from $C$ and the remaining ones close to $C$ (``near'' outliers). Our first idea is to randomly sample a subset $S$ of $O(m \log m)$ points from $X$ with sampling probability proportional to distance (or square of distance) from the set $C$. This sampling ensures that $S$ contains the far outliers with high probability (\Cref{cl:F}). We can then iterate over all subsets of $S$ to guess the exact subset of far outliers. Handling the near outliers is more challenging and forms the heart of the technical contribution of this paper. 

We ``assign'' each near outlier to its closest point in $C$ -- let $\Xopt_{N,j}$ be the set of outliers assigned to $c_j$. By iterating over all choices, we can guess the cardinality $t_j$ of each of the sets $\Xopt_{N,j}$. We now set up a suitable minimum cost bipartite $b$-matching instance which assigns a set of $t_j$ points to each center $c_j$. Let $\hatX_j$ be the set of points assigned to $c_j$. Our algorithm uses $\cup_j \hatX_j$ as the set of near outliers. In the analysis, we need to argue that there is a way of matching the points in $\Xopt_{N,j}$ to $\hatX_j$ whose total cost (sum of distances or squared distances between matched points) is small (\Cref{lem:mu}). The hope is that we can go from the optimal set of outliers in $\cO$ to the ones in the algorithm and argue that the increase in cost is small. Since we are dealing with constrained clustering, we need to ensure that this process does not change the size of each of the clusters. To achieve this, we need to further modify the matching between the two sets of outliers (\Cref{lem:hmu}). Finally, with this modified matching, we are able to argue that the cost of the solution produced by the algorithm is close to that of the optimal solution. The extension to the labelled version follows along similar lines.

In the remaining paper, we prove our two main results, Theorem~\ref{thm:main} and Theorem~\ref{thm:main-labelled}.
The main discussion will be for Theorem~\ref{thm:main} since Theorem~\ref{thm:main-labelled} is an extension of Theorem~\ref{thm:main} that uses the same proof ideas.
In the following sections, we give the details of our algorithm (Section \ref{sec:algorithm}) and its analysis (Section~\ref{sec:analysis}). In Section~\ref{sec:labelled}, we discuss the extension to the labelled version.

\section{Algorithm}\label{sec:algorithm}
In this section, we describe the algorithm for the outlier constrained clustering problem. Consider an instance  $\cI = (X,F,k,m,\chck, \cost)$ of this problem. Recall that the parameter $z=1$ or $2$ depends on whether the $\cost$ function is like the $\kmed$ or the $\kmean$ objective respectively. In addition, we assume the existence of the following algorithms:
\begin{itemize}
    \item A constant $\beta$-factor algorithm $\C$ for the $\kmed$ or the $\kmean$ problem (depending on $z=1$ or $z=2$ respectively): an instance here is specified by a tuple $(X', F',k')$ only, where $X'$ is the set of input points, $F'$ is the set of potential locations for a center, and $k'$ denotes the number of clusters. 
    \item An algorithm $\A$ for the outlier-free version of this problem. An instance here is given by a tuple $(X',F', k, \chck, \cost)$ where the $\chck$ and the $\cost$ functions are the same as those in $\cI$. 
    \item An algorithm $\M$ for the $\lmatch$ problem: an instance of the $\lmatch$ problem is specified by a weighted bi-partite graph $G=(L,R = \{v_1, \ldots, v_r\} ,E)$, with edge $e$ having weight $w_e$; and a tuple $(t_1, \ldots, t_r)$, where $t_i, i \in [r], $ are non-negative integers.  A solution needs to find a subset of $E'$ of $E$ such each vertex of $L$ is incident with at most one edge of $E'$, and each vertex $v_j \in R$ is incident with {\it exactly} $t_j$ edges of $E'$. The goal is to find such a set $E'$ of minimum total weight. 
\end{itemize}

\noindent
We now define $D^z$-sampling:
\begin{defn}
Given sets $C$ and $X$ of points, {\it $D^z$-sampling from $X$ w.r.t. $C$} samples a point $x \in X$, where the probability of sampling $x$ is proportional to $D^z(x,C)$.
\end{defn}

\begin{figure}[t!]
\centering
\includegraphics[scale=0.30]{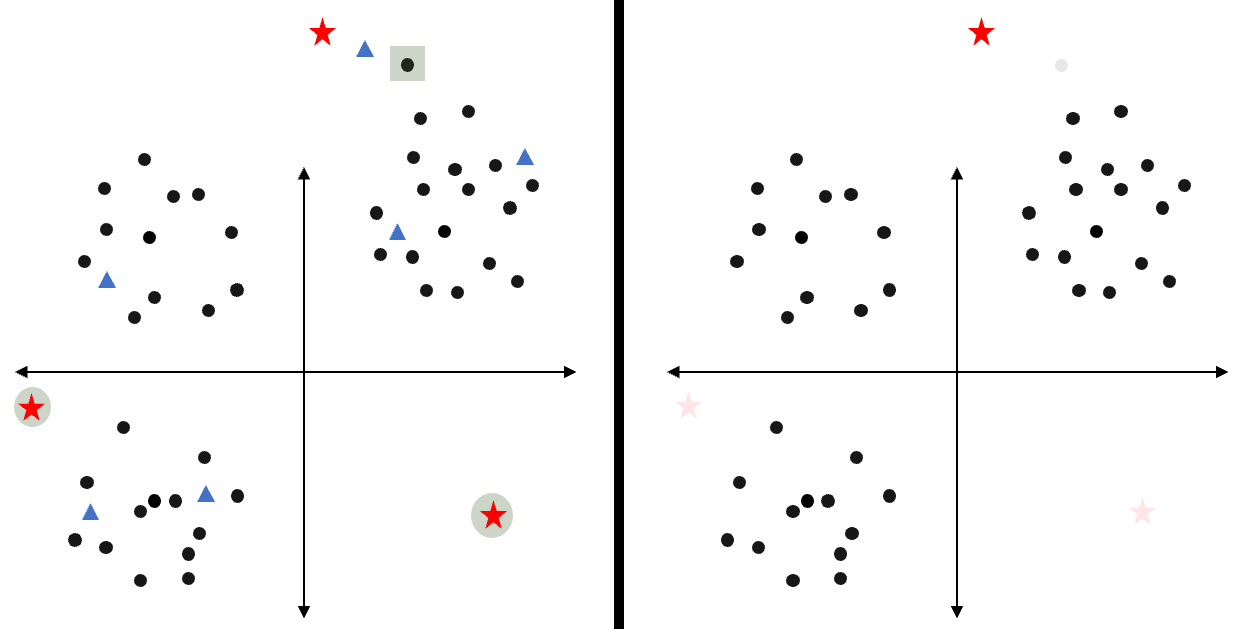}
\caption{An example 2-dimensional instance with ($k=3; m = 3; F = C$), where the red stars are the optimal outliers. The reduction algorithm finds a set $C$ of $k+m = 6$ centers ({\em shown as blue triangles}). It then $D^z$-samples $O(m \log{m})$ points with respect to center set $C$, which guarantees that the faraway outliers ({\em see red stars shaded with green circles}) are found. The outliers near $C$ ({\em see the top red star}) are not discovered this way. So, we find a suitable ``replacement or proxy'' ({\em see the point shaded with green square}) for such outliers by setting up a $b$-matching problem to locate suitable points that are close to the centers in $C$. The instance for the outlier-free version is obtained by removing a suitable subset of proxies and faraway outliers from the point set ({\em see figure on the right}). The key technicality lies in showing that designating proxies as outliers does not increase the cost too much. 
}
\label{fig:algorithm-outline}
\end{figure}

See Figure~\ref{fig:algorithm-outline} for a high-level outline of the algorithm.
The algorithm is described in~\Cref{algo:cluster}. It first runs the algorithm $\C$ to obtain a set of $(k+m)$ centers $C$ in line~\ref{l:C}. In line~\ref{l:D2}, we sample a subset $S$ where  each point in $S$ is sampled independently using $D^z$-sampling w.r.t. $C$.  Given a subset $Y$, we say that a tuple $\btau=(t_1, \ldots, t_{k+m})$ is {\it valid w.r.t. $Y$}  if $t_j \geq 0$ for all $j \in [k+m]$, and $\sum_{j} t_j + |Y| = m$. For each subset $Y$ of size $\leq m$ of $S$ and for each valid tuple $\btau$, the algorithm constructs a solution $(X^{(Y,\btau)}_0, X^{(Y,\btau)}_1, \ldots, X^{(Y,\btau)}_k)$, where $X^{(Y,\btau)}_0$ denotes the set of outlier points. This is done by first computing the set $X^{(Y,\btau)}_0$, and then using the algorithm $\A$ on the remaining points $X \setminus (X_0^{(Y, \btau)} \cup Y)$ (line~\ref{l:rest}). To find the set $X^{(Y,\btau)}_0$, we construct an instance $\cI^{(Y, \btau)}$ of $\lmatch$ first (line~\ref{l:match}). This instance is defined as follows: the bipartite graph has the set of $(k+m)$ centers $C$ on the right side and the set of points $X$ on the left side. The weight of an edge between a vertex $v \in C$ and $w \in X$ is equal to $D^z(v,w)$. For each vertex $v_j \in C$, we require that it is matched to exactly $t_j$ points of $X$. We run the algorithm $\M$ on this instance of $\lmatch$ (line~\ref{l:M}). We define $X^{(Y,\btau)}_0$ as the set of points of $X$ matched by this algorithm. Finally, we output the solution of minimum cost (line~\ref{l:pick}).

 \begin{algorithm}[H]
  \caption{Algorithm for outlier constrained clustering.}
  \label{algo:cluster}
    {\bf Input:}  $\cI := (X,F,k,m, \chck, \cost)$\;
     Execute $\C$ on the instance $\cI' := (X,F,k+m)$ to obtain a set $C$ of $k+m$ centers. \label{l:C} \;
     Sample a set $S$ of $\lceil \frac{4\beta m \log m}{\varepsilon} \rceil$ points with replacement, each using $D^z$-sampling from $X$ w.r.t. $C$. \label{l:D2} \;
      \For{ each subset $Y \subset S, |Y| \leq m$}{
          \For {each valid tuple $\btau = (t_1, \ldots, t_{k+m})$ w.r.t. $Y$ \label{l:T}} { 
              Construct the instance $\cI^{(Y,{\bf \tau})}$  \label{l:match} \;
               Run $\M$ on $\cI^{(Y,{\bf \tau})}$ and let $X_0^{(Y, \btau)}$ be the set of matched points in $X$. \label{l:M} \;
               Run the algorithm $\A$ on the instance $(X \setminus (X_0^{(Y, \btau)} \cup Y), F, k, \chck, \cost)$. \label{l:rest} \;
               Let $(X^{(Y,\btau)}_1, \ldots, X^{(Y,\btau)}_k)$ be the clustering produced by $\A$. \label{l:outl} \;
          }
    }
     Let $(Y^\star, \btau^\star)$ be the pair for which $\cost(X^{(Y,\btau)}_1, \ldots, X^{(Y,\btau)}_k)$ is minimized. \label{l:pick} \;
      {\bf Output} $(X^{(Y^\star,\btau^\star)}_0, X^{(Y^\star,\btau^\star)}_1, \ldots, X^{(Y^\star,\btau^\star)}_k)$. \label{l:out}
\end{algorithm}

\section{Analysis}\label{sec:analysis}
We now analyze~\Cref{algo:cluster}. We refer to the notation used in this algorithm. Let $\cI=(X,F,k,m,\chck, \cost)$ be the instance of the outlier constrained clustering problem. Let $\opt(\cI)$ denote the optimal cost of a solution for the instance $\cI$.  Assume that the algorithm $\C$ for the unconstrained clustering problem (used in line~\ref{l:C}) is a $\beta$-approximation algorithm. We overload notation and use $\cost_{\cI'}(C)$ to denote the cost of the solution $C$ for the instance $\cI'$. Observe that the quantity $\cost_{\cI'}(C)$ can be computed as follows: each point in $X$ is assigned to the closest point in $C$, and then we compute the total cost (which could be the $\kmed$ or the $\kmean$ cost based on the value of the parameter $z$) of this assignment.
We first relate $\cost_{\cI'}(C)$ to $\opt(\cI)$. 

\begin{clm}
\label{cl:1}
$\cost_{\cI'}(C) \leq \beta \cdot \opt(\cI)$. 
\end{clm}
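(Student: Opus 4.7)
The plan is to chain two inequalities:
\begin{equation*}
\cost_{\cI'}(C) \;\leq\; \beta \cdot \opt(\cI') \;\leq\; \beta \cdot \opt(\cI).
\end{equation*}
The first follows directly from the assumption that $\C$ is a $\beta$-approximation algorithm for the unconstrained $(k+m)$-clustering instance $\cI' = (X, F, k+m)$. So the real work lies in the second inequality: intuitively, the outlier constrained $k$-clustering of $\cI$ is at least as hard as the unconstrained $(k+m)$-clustering of $\cI'$, because the latter is obtained from the former by dropping the $\chck$ constraint and promoting the $m$ outlier slots into $m$ additional centers.

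To make this precise I would take an optimal solution $(X^{\opt}_0, X^{\opt}_1, \ldots, X^{\opt}_k, f^{\opt}_1, \ldots, f^{\opt}_k)$ for $\cI$ and exhibit a feasible $(k+m)$-center solution for $\cI'$ of no greater cost. Open the $k$ centers $f^{\opt}_1, \ldots, f^{\opt}_k \in F$ together with the outlier points of $X^{\opt}_0$ as additional centers (padding with arbitrary points from $F$ if $|X^{\opt}_0| < m$, so that exactly $k+m$ centers are opened). In this $\cI'$ assignment, each non-outlier $x \in X^{\opt}_i$ is assigned to $f^{\opt}_i$ and contributes the same $D^z(x, f^{\opt}_i)$ as in the optimal solution to $\cI$, while each outlier is assigned to itself and contributes $0$. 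Summing, the total cost equals exactly $\opt(\cI)$, which gives $\opt(\cI') \leq \opt(\cI)$.

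The main subtlety I anticipate is justifying that the outlier points of $X^{\opt}_0$ may themselves serve as centers of $\cI'$, i.e., that they lie in $F$. For the problems under consideration this is a standard modelling assumption — in Euclidean settings $F$ is the ambient space, and in the general metric settings the convention is $X \subseteq F$ — so no real mathematical obstacle arises. Combining the two inequalities then yields the claim.
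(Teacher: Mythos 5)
Your proof is correct and follows essentially the same route as the paper: both construct the candidate solution $C' = X^{\opt}_0 \cup \{f^{\opt}_1, \ldots, f^{\opt}_k\}$ for $\cI'$, bound its cost by $\opt(\cI)$, and then invoke the $\beta$-approximation guarantee of $\C$. You additionally flag the hypothesis $X \subseteq F$ (needed so that outlier points can be opened as centers) and the padding when $|X^{\opt}_0| < m$, which the paper leaves implicit.
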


\begin{proof}
    Let $(X_0, X_1, ..., X_k)$ denote the optimal solution for $\cI$, where $X_0$ denotes the set of $m$ outlier points (without loss of generality, we can assume that the number of outlier points in the optimal solution is exactly $m$).
    Let $c_1, \ldots, c_k$ be the centers of the clusters $X_1, \ldots, X_k$ respectively. Consider the solution to $\cI'$ consisting of centers $C' :=X_0 \cup \{c_1, \ldots, c_k\}$. Clearly, $\cost_{\cI'}(C') \leq \opt(\cI)$ (we have inequality here because the solution $X_1, \ldots, X_k$ may not be a Voronoi partition with respect to $c_1, \ldots, c_k$). Since $\C$ is a $\beta$-approximation algorithm, we know that $\cost_{\cI'}(C) \leq \beta \cdot \cost_{\cI'}(C')$. Combining these two facts implies the desired result. 
\end{proof}

We now consider an optimal solution for the instance $\cI$: let $\Xopt_0, \Xopt_1, \ldots, \Xopt_k$ be the partition of the input points $X$ in this solution, with $\Xopt_0$ being the set of $m$  outliers. 
Depending on the distance from $C$, we divide the set $\Xopt_0$ into two subsets -- $\Xopt_F$ (``far'' points) and $\Xopt_N$ (``near'' points) as follows:

$$ \Xopt_F := \left\{ x \in \Xopt_0 | D^z(x,C) \geq \frac{\varepsilon \, \cost_{\cI'}(C)}{2 \beta m} \right\}, \quad \Xopt_N := X \setminus \Xopt_F. $$

Recall that we 
sample a set $S$ of $\frac{4 \beta m \log{m}}{\veps}$ clients using $D^z$-sampling with respect to center set $C$ (line~\ref{l:D2} in~\Cref{algo:cluster}). Note that the probability of sampling a point $x$ is given by 
\begin{align}
    \label{eq:probD}
    \frac{D^z(x,C)}{\sum_{x' \in X} D^z(x,C)} = \frac{D^z(x,C)}{\cost_{\cI'}(C)}. 
\end{align}
We first show that $S$ contains all the points in $\Xopt_F$ with high probability. 

\begin{clm}
\label{cl:F}
$\pr[\Xopt_F \subseteq S] \geq 1 - 1/m$.
\end{clm}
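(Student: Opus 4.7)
The plan is to lower-bound, for each fixed far outlier $x \in \Xopt_F$, the probability that $x$ is missed in all $|S|$ independent samples, and then take a union bound over the at most $m$ far outliers. The key ingredient is that far outliers are precisely those that are heavy under the $D^z$-sampling distribution with respect to $C$, so each one is hit with non-trivial probability in a single draw.

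First I would write down the per-sample probability. By~\eqref{eq:probD}, a single $D^z$-sample hits a specific $x \in X$ with probability $D^z(x,C)/\cost_{\cI'}(C)$. For $x \in \Xopt_F$ the definition gives
\[
\frac{D^z(x,C)}{\cost_{\cI'}(C)} \;\geq\; \frac{\veps}{2\beta m}.
\]
Since $|S| = \frac{4\beta m \log m}{\veps}$ and the samples are independent, the probability that $x$ is missed by every one of them is at most
\[
\left(1 - \frac{\veps}{2\beta m}\right)^{\frac{4\beta m \log m}{\veps}} \;\leq\; \exp\!\left(-\frac{\veps}{2\beta m}\cdot \frac{4\beta m \log m}{\veps}\right) \;=\; e^{-2\log m} \;=\; \frac{1}{m^2}.
\]

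Next I would union-bound over $\Xopt_F$. Since $|\Xopt_F| \leq |\Xopt_0| = m$, the probability that at least one far outlier is missing from $S$ is at most $m \cdot \tfrac{1}{m^2} = \tfrac{1}{m}$, which gives $\pr[\Xopt_F \subseteq S] \geq 1 - 1/m$ as required.

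There is no real obstacle here: the claim is essentially a standard ``important sample'' argument, and the constants in the statement of the algorithm (the factor $4\beta m \log m / \veps$ in line~\ref{l:D2} and the threshold $\veps\, \cost_{\cI'}(C) / (2\beta m)$ defining $\Xopt_F$) have been chosen precisely so that the single-sample hitting probability times the number of samples yields $2\log m$ in the exponent. The only thing to be careful about is using the elementary inequality $1-t \leq e^{-t}$ after noting $\veps/(2\beta m) \in [0,1]$, and using independence of the samples (which is valid since line~\ref{l:D2} samples with replacement).
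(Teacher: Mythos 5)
Your proof is correct and follows essentially the same route as the paper: lower-bound the per-sample hitting probability of each far outlier by $\veps/(2\beta m)$ via~\eqref{eq:probD}, bound the miss probability over $\frac{4\beta m \log m}{\veps}$ independent samples by $1/m^2$, and take a union bound over the at most $m$ far outliers. Your write-up just makes the intermediate step $1-t \leq e^{-t}$ explicit, which the paper leaves implicit.
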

\begin{proof}
Inequality~\eqref{eq:probD} shows that the probability of sampling a point $x \in \Xopt_F$ is $\frac{D^z(x, C)}{\cost_{\cI'}(C)} \geq \frac{\veps}{2\beta m}$. Hence the probability that the point $x$ is not present in $S$ is at most $\left(1-\frac{\veps}{2\beta m} \right)^{\frac{4\beta m \log{m}}{\veps}} \leq \frac{1}{m^2}$. Using union bound, the probability that there is a point in $\Xopt_F$ that is not included in $S$ is at most $$\frac{|S|}{m^2} = \frac{1}{m}.$$
This shows the desired result. 
\end{proof}

\noindent
For the rest of the analysis, we condition on the event in \Cref{cl:F}, i.e., we assume $\Xopt_F \subseteq S$. We now note that the total cost of assigning $\Xopt_N$ to $C$ is  $O(\veps) \cdot \opt(\cI)$.
\begin{clm}\label{cl:3}
$\sum_{x \in \Xopt_N}  D^z(x, C) \leq \frac{\veps}{2} \cdot \opt(\cI)$.
\end{clm}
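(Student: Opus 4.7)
\textbf{Proof plan for Claim~\ref{cl:3}.} The plan is to simply combine the definition of $\Xopt_N$ with Claim~\ref{cl:1}. By definition, every point $x \in \Xopt_N \subseteq \Xopt_0 \setminus \Xopt_F$ satisfies the strict bound
$$D^z(x, C) < \frac{\veps \cdot \cost_{\cI'}(C)}{2 \beta m}.$$
Since $\Xopt_N \subseteq \Xopt_0$ and $|\Xopt_0| = m$, I would sum this inequality over at most $m$ points to get
$$\sum_{x \in \Xopt_N} D^z(x, C) \;\leq\; m \cdot \frac{\veps \cdot \cost_{\cI'}(C)}{2 \beta m} \;=\; \frac{\veps \cdot \cost_{\cI'}(C)}{2\beta}.$$

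The final step is to invoke Claim~\ref{cl:1}, which gives $\cost_{\cI'}(C) \leq \beta \cdot \opt(\cI)$. Substituting this into the bound cancels the $\beta$ factor and yields
$$\sum_{x \in \Xopt_N} D^z(x, C) \;\leq\; \frac{\veps}{2} \cdot \opt(\cI),$$
as required. There is no real obstacle here — the claim is an immediate consequence of the thresholding definition of ``near'' outliers (chosen precisely so that this calculation goes through) together with the lower bound on $\opt(\cI)$ coming from the $(k+m)$-clustering lower bound. The only subtlety worth flagging is that the threshold in the definition of $\Xopt_F$ was calibrated with exactly the factor $\frac{\veps}{2\beta m}$ so that after multiplying by $|\Xopt_N| \leq m$ and dividing by $\beta$ (via Claim~\ref{cl:1}), we land on the desired $\frac{\veps}{2}\cdot\opt(\cI)$ bound.
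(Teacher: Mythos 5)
Your proposal is correct and follows essentially the same route as the paper: bound each term by the threshold in the definition of $\Xopt_F$, use $|\Xopt_N| \leq m$, and apply Claim~\ref{cl:1} to convert $\cost_{\cI'}(C)$ into $\beta \cdot \opt(\cI)$. The only cosmetic difference is the order in which the $\beta$-cancellation and the sum over $m$ points are performed, which does not affect the argument.
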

\begin{proof}
The claim follows from the following sequence of inequalities:
\begin{eqnarray*}
\sum_{x \in \Xopt_N}  D^z(x, C) < \sum_{x \in \Xopt_N} \frac{\veps \, \cost_{\cI'}(C)}{2 \beta m} \leq \sum_{x \in \Xopt_N} \frac{\veps \cdot \opt(\cI)}{2m} \leq \frac{\varepsilon}{2} \cdot \opt(\cI),
\end{eqnarray*}
where the first inequality follows from the definition of $\Xopt_N$ and the second inequality follows from~\Cref{cl:1}. 
\end{proof}

For every point in $\Xopt_N$, we identify the closest center in $C = \{c_1, \ldots, c_{m+k} \}$ (breaking ties arbitrarily). For each $j \in [k+m]$, let $\Xopt_{N,j}$  be the set of points in $\Xopt_N$ which are closest to $c_j$.  
Let $\hatt_j$ denote $|\Xopt_{N,j}|$. Consider an iteration of line~\ref{l:M}--\ref{l:outl} where $Y=\Xopt_F, \btau=(\hatt_1, \ldots, \hatt_{k+m})$. Observe that $\btau$ is valid with respect to $Y$ because $\sum_{j \in [m+k]} |\hatt_j| + |Y| = m $. Let $\hatX_1, \ldots, \hatX_{m+k}$ be the set of points assigned to $c_1, \ldots, c_{m+k}$ respectively by the algorithm $\M$. Intuitively, we would like to construct a solution where the set of outliers is given by $\hatX := \Xopt_F \cup \hatX_1 \cup \cdots \cup \hatX_{m+k}$. We now show that the set $\hatX$ is ``close'' to $\Xopt_0$, the set of outliers in the optimal solution.  In order to do this, we set up a bijection $\mu: \Xopt_0 \rightarrow \hatX$, where $\mu$ restricted to $\Xopt_F$ is identity, and $\mu$ restricted to any of the sets $\Xopt_{N,j}$ is a bijection from $\Xopt_{N,j}$ to $\hatX_{j}$. Such a function $\mu$ is possible because for each $j \in [m+k]$, $|\Xopt_{N,j}| = |\hatX_j| = \hatt_j$. We now prove this closeness property.

\begin{lemma}\label{lem:mu}
$\sum_{x \in \Xopt_0} D^z(x, \mu(x)) \leq \varepsilon \cdot z \cdot \opt(\cI)$.
\end{lemma}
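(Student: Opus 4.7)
Since $\mu$ restricted to $\Xopt_F$ is the identity, every term indexed by $x \in \Xopt_F$ vanishes, so it suffices to bound $\sum_{x \in \Xopt_N} D^z(x, \mu(x))$. I will split this sum according to the partition $\Xopt_N = \bigsqcup_{j \in [k+m]} \Xopt_{N,j}$; on the $j$-th block, $\mu$ is a bijection onto $\hatX_j$, and the natural ``pivot'' is the center $c_j$, which is the closest center in $C$ to every $x \in \Xopt_{N,j}$ (by definition of $\Xopt_{N,j}$) and is also the center to which $\M$ has assigned $\hatX_j$.

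\textbf{Step 1: Triangle inequality through $c_j$.} For each $j$ and each $x \in \Xopt_{N,j}$, the approximate triangle inequality~\eqref{eq:tr} gives
\[
    D^z(x, \mu(x)) \le z \bigl(D^z(x, c_j) + D^z(c_j, \mu(x))\bigr).
\]
Summing over $x \in \Xopt_{N,j}$ and then over $j$, and using that $\mu$ is a bijection from $\Xopt_{N,j}$ onto $\hatX_j$,
\[
    \sum_{x \in \Xopt_N} D^z(x, \mu(x)) \le z \left( \sum_{j=1}^{k+m} \sum_{x \in \Xopt_{N,j}} D^z(x, c_j) \;+\; \sum_{j=1}^{k+m} \sum_{y \in \hatX_j} D^z(y, c_j) \right).
\]

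\textbf{Step 2: Bounding the first term.} Since each $x \in \Xopt_{N,j}$ has $c_j$ as its closest center in $C$, we have $D^z(x, c_j) = D^z(x, C)$. Therefore the first inner double-sum equals $\sum_{x \in \Xopt_N} D^z(x, C)$, which is at most $\tfrac{\veps}{2} \cdot \opt(\cI)$ by~\Cref{cl:3}.

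\textbf{Step 3: Bounding the second term via $b$-matching optimality.} This is the heart of the argument. The tuple $\btau=(\hatt_1,\dots,\hatt_{k+m})$ with $\hatt_j=|\Xopt_{N,j}|$ is valid for $Y=\Xopt_F$ (since $\sum_j \hatt_j + |\Xopt_F| = |\Xopt_0|=m$), and by~\Cref{cl:F} (which we are conditioning on) $\Xopt_F \subseteq S$, so this $(Y,\btau)$ is actually enumerated by~\Cref{algo:cluster}. In the corresponding $b$-matching instance $\cI^{(Y,\btau)}$, the assignment that sends each $c_j$ to the set $\Xopt_{N,j}$ is feasible: the sets $\Xopt_{N,j}$ are pairwise disjoint and each $c_j$ receives exactly $\hatt_j$ points. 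Since $\M$ returns an optimal $b$-matching, its cost is no larger:
\[
    \sum_{j=1}^{k+m} \sum_{y \in \hatX_j} D^z(y, c_j) \;\le\; \sum_{j=1}^{k+m} \sum_{x \in \Xopt_{N,j}} D^z(x, c_j) \;=\; \sum_{x \in \Xopt_N} D^z(x, C) \;\le\; \frac{\veps}{2}\cdot \opt(\cI),
\]
where the last step is again~\Cref{cl:3}. (One small bookkeeping point I will verify: either $\M$ matches on $X\setminus Y$, or, if it matches on all of $X$, one may still compare to $\Xopt_N$, which is disjoint from $Y=\Xopt_F$, so the comparison remains valid.)

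\textbf{Step 4: Combine.} Adding the two bounds inside the parenthesis of Step~1 yields $\veps \cdot \opt(\cI)$, and the overall factor of $z$ then gives
\[
    \sum_{x \in \Xopt_0} D^z(x, \mu(x)) \;=\; \sum_{x \in \Xopt_N} D^z(x, \mu(x)) \;\le\; z\cdot \veps \cdot \opt(\cI),
\]
which is precisely the claimed inequality. The main subtlety—the ``hard part''—is Step~3: one must exhibit $\Xopt_N$ with its natural partition as a feasible competitor in the very $b$-matching instance that the algorithm solves, so that optimality of $\M$ can convert a statement about an inaccessible object (the optimal outliers) into a statement about the algorithm's output $\hatX$.
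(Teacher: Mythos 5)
Your proof is correct and follows essentially the same route as the paper's: split off $\Xopt_F$ where $\mu$ is the identity, apply the approximate triangle inequality through $c_j$ on each block $\Xopt_{N,j}$, bound the first resulting sum by \Cref{cl:3}, and bound the second by comparing the optimal $b$-matching output $\hatX_j$ against the feasible competitor that assigns $\Xopt_{N,j}$ to $c_j$ (the paper's inequality~\eqref{eq:A}). The extra bookkeeping you flag (feasibility of the competitor and disjointness of $\Xopt_N$ from $Y=\Xopt_F$) is consistent with the paper's setup, so nothing is missing.
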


\begin{figure}[t!]
\centering
\includegraphics[scale=0.30]{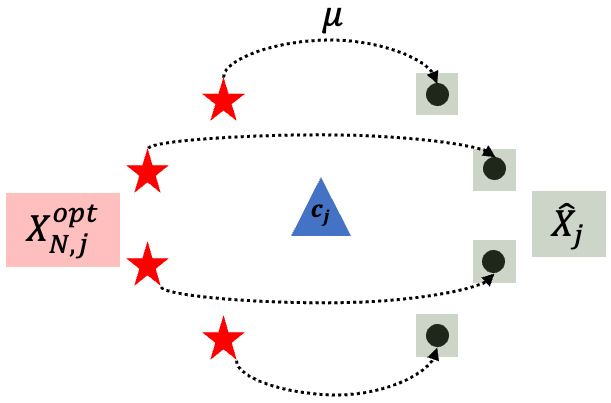}
\caption{The optimal outliers with closest center as $c_j$ ({\em see red stars}) are denoted by $X_{N, j}^{opt}$. Since we cannot distinguish them from other points near $c_j$, we find their proxies $\hat{X}_j$ ({\em see points shaded green}). Even though we show these sets as disjoint in the diagram, they may contain common points. We will designate $\hat{X}_j$ as the outlier points. This replacement of optimal outliers with their proxies may cause a loss. However, this loss can be bounded by the sum of distances between an optimal outlier and its image as per a one-to-one mapping $\mu$ ({\em see dotted arrows}) between $X_{N, j}^{opt}$ and $\hat{X}_j$.}
\label{fig:lemma-1}
\end{figure}

\begin{proof}
    We first note a useful property of the solution given by the algorithm $\M$. One of the possible solutions for the instance $\cI^{(Y, \btau)}$ could have been assigning $\Xopt_{N,j}$ to the center $c_j$. Since $\M$ is an optimal algorithm for $\lmatch$, we get 
    \begin{align}
        \label{eq:A}
        \sum_{j \in [k+m]} \sum_{x \in \hatX_j} D^z(x,c_j) \leq 
        \sum_{j \in [k+m]} \sum_{x \in \Xopt_{N,j}} D^z(x,c_j) = \sum_{x \in \Xopt_N} D^z(x,C) \leq \frac{\varepsilon}{2} \cdot \opt(\cI), 
    \end{align}
    where the last inequality follows from~\Cref{cl:3}. Now, 
    \begin{align}
        \notag
    \sum_{x \in \Xopt_0} D^z(x,\mu(x)) & =  \sum_{x \in \Xopt_N} D^z(x,\mu(x)) = \sum_{j \in [k+m]} \sum_{x \in \Xopt_{N,j}} D^z(x,\mu(x)) \\
    & \stackrel{\eqref{eq:tr}}{\leq} z \cdot \sum_{j \in [k+m]} \sum_{x \in \Xopt_{N,j}} \left( D^z(x,c_j) + D^z(c_j,\mu(x)) \right),
    \end{align}
    where the first equality follows from the fact that $\mu$ is identity on $\Xopt_F$. Since $\mu$ is a bijection from $\Xopt_{N,j}$ to $\hatX_j$, the above can also be written as 
    $$ z \cdot  \sum_{j \in [k+m]} \sum_{x \in \Xopt_{N,j}} D^z(x,c_j) + 
    z \cdot  \sum_{j \in [k+m]} \sum_{x \in \hatX_{j}} D^z(x,c_j) \leq z \cdot \varepsilon \; \opt(\cI),$$
    where the last inequality follows from~\Cref{cl:3} and~\eqref{eq:A}. This proves the desired result. 
\end{proof}

The mapping $\mu$ described above may have the following undesirable property: there could be a point $x \in \Xopt_0 \cap \hatX$ such that $\mu(x) \neq x$. This could happen if $x \in \Xopt_{N,j}$ and $x \in \hatX_i$ where $i \neq j$. We now show that $\mu$ can be modified to another bijection $\hmu: \Xopt_0 \rightarrow \hatX$ which is identity on $\Xopt_0 \cap \hatX.$ Note that the mapping $\hmu$ is only needed for the analysis of the algorithm. 
\begin{lemma}
    \label{lem:hmu}
    There is a bijection $\hmu: \Xopt_0 \rightarrow \hatX$ such that $\hmu(x) = x$ for all $x \in \Xopt_0 \cap \hatX$ and 
    $$ \sum_{x \in \Xopt_0} D^z(x,\hmu(x)) \leq m^{z-1} \, \varepsilon \cdot z \cdot \opt(\cI). $$
\end{lemma}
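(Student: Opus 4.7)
The plan is to modify $\mu$ along the natural path--cycle decomposition of the matching it induces between $\Xopt_0$ and $\hatX$. Partition $\Xopt_0$ into $A := \Xopt_0 \cap \hatX$ and $B := \Xopt_0 \setminus \hatX$, and symmetrically $\hatX$ into $A$ and $C := \hatX \setminus \Xopt_0$; since $|\Xopt_0| = |\hatX| = m$, we have $|B| = |C|$. Build a directed graph on the vertex set $\Xopt_0 \cup \hatX$ (with elements of $A$ identified across the two sides) by inserting an edge $x \to \mu(x)$ for every $x \in \Xopt_0$. Every $b \in B$ has in-degree $0$ and out-degree $1$, every $c \in C$ has in-degree $1$ and out-degree $0$, and every $a \in A$ has in- and out-degree $1$, so the graph decomposes into simple paths of the form $b \to a_1 \to \cdots \to a_r \to c$ (possibly with $r = 0$) from $B$ to $C$ together with simple cycles lying entirely in $A$.

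I would then define $\hmu(a) := a$ for every $a$ that lies on a cycle or in the interior of some path, and $\hmu(b) := c_b$ for every $b \in B$, where $c_b \in C$ is the endpoint of the unique path beginning at $b$. This is a bijection $\Xopt_0 \to \hatX$ that is the identity on $A = \Xopt_0 \cap \hatX$, as required, and satisfies
$$\sum_{x \in \Xopt_0} D^z(x, \hmu(x)) = \sum_{b \in B} D^z(b, c_b).$$

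To bound each term $D^z(b, c_b)$, I would walk along the path $P_b$ with edges $(b, a_1), (a_1, a_2), \ldots, (a_r, c_b)$. For $z = 1$, the ordinary triangle inequality gives $D(b, c_b) \leq \sum_{e \in P_b} D(e)$; for $z = 2$, Cauchy--Schwarz gives $D^2(b, c_b) \leq \bigl( \sum_{e \in P_b} D(e) \bigr)^2 \leq \ell_b \sum_{e \in P_b} D^2(e)$, where $\ell_b$ is the number of edges on $P_b$. Since $b, a_1, \ldots, a_r$ all lie in $\Xopt_0$, we have $\ell_b = r + 1 \leq m$, so in both cases $D^z(b, c_b) \leq m^{z-1} \sum_{e \in P_b} D^z(e)$. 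Summing over all $b \in B$ (cycle edges are unused) and observing that the path edges form a disjoint subset of $\{(x, \mu(x)) : x \in \Xopt_0\}$ yields
$$\sum_{x \in \Xopt_0} D^z(x, \hmu(x)) \;\leq\; m^{z-1} \sum_{x \in \Xopt_0} D^z(x, \mu(x)) \;\leq\; m^{z-1} \cdot \veps \cdot z \cdot \opt(\cI)$$
by \Cref{lem:mu}. The main subtlety is the $z = 2$ case: a naive iteration of the approximate triangle inequality \eqref{eq:tr} would accumulate an exponential factor $2^{\ell_b}$ along the path, so the Cauchy--Schwarz form $\left( \sum d_i \right)^2 \leq \ell \sum d_i^2$ is essential to obtain the correct linear $m^{z-1}$ dependence.
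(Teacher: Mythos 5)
Your proof is correct and follows essentially the same route as the paper's: both build the functional digraph of $\mu$ on $\Xopt_0 \cup \hatX$, decompose it into vertex-disjoint paths from $\Xopt_0 \setminus \hatX$ to $\hatX \setminus \Xopt_0$ and cycles inside $\Xopt_0 \cap \hatX$, reroute each path directly from source to sink, and bound the cost via the triangle inequality for $z=1$ and a Cauchy--Schwarz/power-mean inequality along the path for $z=2$. The only cosmetic difference is that the paper bounds the \emph{increase} $\cost(\hmu) - \cost(\mu)$ path by path, whereas you bound $\cost(\hmu)$ directly; both land on $\cost(\hmu)\le m^{z-1}\cost(\mu)$.
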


\begin{figure}[t!]
\centering
\includegraphics[scale=0.30]{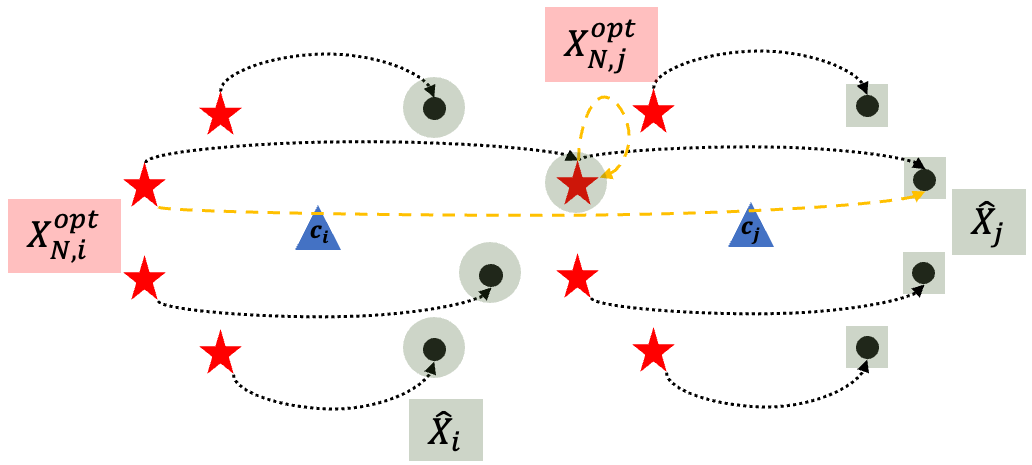}
\caption{We want to designate the proxies as outliers instead of their pre-images ({\em as per the mapping $\mu$ defined in Figure~\ref{fig:lemma-1}}). The penalty of this replacement will not be too much, as per Lemma~\ref{lem:mu}. However, there is an issue with this plan if a proxy point in $\hat{X}_i$ is also an optimal outlier in $X^{opt}_{N, j}$ for $i \neq j$ ({\em see the star shaded with a green circle}). In this case, we modify the one-to-one mapping $\mu$ to $\hat{\mu}$ by tracing the mapping $\mu$ starting from an optimal outlier to a non-outlier ({\em see star on the left to point on the right}). We map the extreme points to each other and map the intermediate points to themselves ({\em see yellow dashed lines}). The penalty of this mapping will now depend on the distance between the extreme points, but that can be bounded by applying the approximate triangle inequality along the path.}
\label{fig:lemma-2}
\end{figure}

\begin{proof}
    We construct a directed graph $H=(V_1, E_1)$ where $V_1 = \Xopt_0 \cup \hatX$. For every $x \in \Xopt_0$, we add the directed arc $(x, \mu(x))$ to $E_1$. Observe that a self loop in $H$ implies that $\mu(x)=x$. Every vertex in $\Xopt_0 \setminus \hatX$ has 0 in-degree and out-degree 1; whereas a vertex in $\hatX \setminus \Xopt_0$ has in-degree 1 and 0 out-degree. Vertices in $\hatX \cap \Xopt_0$ have exactly one incoming and outgoing arc (in case of a self-loop, it counts towards both the in-degree and the out-degree of the corresponding vertex). 

     The desired bijection $\hmu$ is initialized to $\mu$. Let $\cost(\hmu)$ denote $\sum_{x \in \Xopt_0} D^z(x, \hmu(x))$; define $\cost(\mu)$ similarly. 
    It is easy to check $H$ is vertex disjoint union of directed cycles and paths. In case of a directed cycle $C$ on more than 1 vertex, it must be the case that each of the vertices in $C$ belong to $\hatX \cap \Xopt_0$. In this case, we update $\hmu$ be defining $\hmu(x)=x$ for each $x \in C$. Clearly
    this can only decrease $\cost(\hmu)$.
    Let $P_1, \ldots, P_l$ be the set of directed paths in $H$. For each path $P_j$, we perform the following update: let $P_j$ be a path from $a_j$ to $b_j$. We know that $a_j \in \Xopt \setminus \hatX$, $b_j \in \hatX \setminus \Xopt_0$ and each internal vertex of $P_j$ lies in $\hatX \cap \Xopt_0$. We update $\hmu$ as follows; $\hmu(a_j) = b_j$ and $\hmu(v)=v$ for each internal vertex $v$ of $P_j$. The overall increase in $\cost(\hmu)$ is equal to 
    \begin{align}
        \label{eq:inc}
    \sum_{j \in [l]} \left( D^z(a_j,b_j) - \sum_{i=1}^{n_j} D^z(v_j^i, v_j^{i-1}) \right),
    \end{align}
    where $a_j = v_j^0,  v_j^1, \ldots, v_j^{n_j} = b_j$ denotes the sequence of vertices in $P_j$. If $z =1$, triangle inequality shows that the above quantity is at most 0. In case $z=2$, 
    $$D^2(a_j,b_j) \leq n_j \left( \sum_{i=1}^{n_j} D^2(v_j^i, v_j^{i-1}) \right),$$
    and so the quantity in~\eqref{eq:inc} is at most $(n_j-1) \sum_{i=1}^{n_j} D^2(v_j^i, v_j^{i-1}).$

    It follows that $\cost(\hmu) \leq m^{z-1}   \cost(\mu).$ The desired result now follows from~\Cref{lem:mu}. 
\end{proof}

We run the algorithm $\A$ on the outlier-free constrained clustering instance $\cI'' = (X \setminus \hatX, F, k,\chck, \cost)$ (line~\ref{l:rest} in~\Cref{algo:cluster}). Let $\opt(\cI'')$ be the optimal cost of a solution for this instance. The following key lemma shows that $\opt(\cI'')$ is close to $\opt(\cI)$. 

\begin{lemma}
    \label{lem:close}
    $\opt(\cI'') \leq (1+\varepsilon^{\frac{1}{z}} (4m+1)^{z-1}) \opt(\cI).$
\end{lemma}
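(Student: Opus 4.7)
The plan is to build a feasible solution for the outlier-free instance $\cI''=(X\setminus\hatX, F, k, \chck, \cost)$ out of an optimal solution for $\cI$, using the bijection $\hmu$ from~\Cref{lem:hmu} to ``swap'' outliers for non-outliers. Let $(\Xopt_0, \Xopt_1, \ldots, \Xopt_k)$ be an optimal partition of $X$ for $\cI$ with centers $c_1^\star, \ldots, c_k^\star$. For each $j \in [k]$, I define
$$X_j'' := (\Xopt_j \setminus \hatX) \cup \hmu^{-1}(\hatX \cap \Xopt_j).$$
Because $\hmu$ is a bijection from $\Xopt_0$ to $\hatX$ that fixes $\Xopt_0 \cap \hatX$, for any $x \in \Xopt_0 \setminus \hatX$ one has $\hmu(x) \in \hatX \setminus \Xopt_0 = \bigcup_{j \geq 1}(\hatX \cap \Xopt_j)$. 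A short verification then shows that $(X_1'', \ldots, X_k'')$ is a partition of $X \setminus \hatX$ and that $|X_j''|=|\Xopt_j|$ for every $j \in [k]$. Since $\chck$ depends only on cluster sizes and center locations, the clustering $(X_1'', \ldots, X_k'')$ with centers $c_1^\star, \ldots, c_k^\star$ is feasible for $\cI''$, so $\opt(\cI'')$ is at most the cost of this clustering.

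Next I would bound the cost of this candidate solution. Writing $j(y)$ for the cluster of $y$ in the optimal solution, the cost equals
$$\opt(\cI) + \sum_{y \in \hatX \setminus \Xopt_0}\Bigl[D^z\bigl(\hmu^{-1}(y), c_{j(y)}^\star\bigr) - D^z\bigl(y, c_{j(y)}^\star\bigr)\Bigr],$$
since $\hmu^{-1}(y)$ has been substituted for $y$ in cluster $j(y)$ while every other point stays with its original center. For $z=1$, the triangle inequality gives $D(\hmu^{-1}(y), c) - D(y, c) \leq D(\hmu^{-1}(y), y)$; summing and using $\hmu(x)=x$ on $\Xopt_0 \cap \hatX$ produces a total of at most $\sum_{x \in \Xopt_0} D(x, \hmu(x)) \leq \varepsilon \cdot \opt(\cI)$ by~\Cref{lem:hmu}, matching the stated bound $(1+\varepsilon)\opt(\cI)$.

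For $z=2$, I would apply the weighted inequality $(a+b)^2 \leq (1+\eta)b^2 + (1+1/\eta)a^2$ with a parameter $\eta>0$ to be tuned, obtaining $D^2(\hmu^{-1}(y), c) - D^2(y, c) \leq \eta\, D^2(y, c) + (1+1/\eta)\, D^2(\hmu^{-1}(y), y)$. Summing, the first piece is bounded by $\eta \cdot \opt(\cI)$ because $\sum_{y \in \hatX \setminus \Xopt_0} D^2(y, c_{j(y)}^\star) \leq \opt(\cI)$, while the second is bounded by $(1+1/\eta) \cdot 2m\varepsilon \cdot \opt(\cI)$ via~\Cref{lem:hmu}. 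Tuning $\eta \asymp \sqrt{\varepsilon}$ balances the two contributions and produces an inflation factor of the form $1 + O(m)\sqrt{\varepsilon}$, which (for $\varepsilon\leq 1$) is bounded by $1 + \varepsilon^{1/z}(2m+1)^{z-1}$ as claimed. The main obstacle is precisely this $z=2$ balancing: the cross term that emerges from squaring the triangle inequality must be absorbed against both the small quantity $\sum D^2(x,\hmu(x))$ and the optimal cost $\sum D^2(y,c)$, and any slack in the tuning of $\eta$ worsens the $m$-dependence. By contrast, the structural step of verifying that the constructed clustering is feasible for $\cI''$ is a direct consequence of $\hmu$ being a bijection that fixes $\Xopt_0 \cap \hatX$, and does not use the metric at all.
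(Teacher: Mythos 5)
Your proposal is correct and follows essentially the same path as the paper: you build the same candidate clusters (your $X_j''$ is exactly the paper's $(\Xopt_j\setminus Z_j)\cup\hmu^{-1}(Z_j)$ with $Z_j=\Xopt_j\cap\hatX$), verify feasibility via cardinality-only dependence of $\chck$, and bound the cost increase using \Cref{lem:hmu} together with the triangle inequality for $z=1$ and the weighted AM--GM step for $z=2$; your tunable $\eta$ is just the paper's fixed choice $\eta=\sqrt{\varepsilon}$ presented with the balancing made explicit.
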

\begin{proof}
    We shall use the solution $(\Xopt_0, \ldots, \Xopt_k)$ to construct a feasible solution for $\cI''$. 
    For each $j \in [k]$, let $Z_j$ denote $\Xopt_j \cap \hatX$. Let $\hmu^{-1}(Z_j)$ denote the pre-image under $\hmu$ of $Z_j$. Since $Z_j \subseteq \hatX \setminus \Xopt_0$, $\hmu^{-1}(Z_j) \subseteq \Xopt_0 \setminus \hatX$.  For each $j \in [k]$, define $ X'_j := (\Xopt_j \setminus Z_j) \cup \hmu^{-1}(Z_j).$
   \begin{clm}
   \label{cl:Z}
   $ \bigcup_{j=1}^k X_j' = X \setminus \hatX.$
   \end{clm}
   \begin{proof}
      For any $j \in [k]$, we have already argued that $\hmu^{-1}(Z_j) \subseteq \Xopt_0 \setminus\hatX \subseteq X \setminus \hatX$. Clearly, $\Xopt_j \setminus Z_j \subseteq X \setminus \hatX$. Therefore $X_j' \subseteq X \setminus \hatX$. Therefore, $\cup_{j \in [k]} X_j' \subseteq X \setminus \hatX$.
      Since $|X_j'| = |\Xopt_j|$, 
      $$ \sum_{j \in [k]} |X_j'| = n-m = |X \setminus \hatX|.$$ This proves the claim.\end{proof}
      The above claim implies that $(X_1', \ldots, X_k')$ is a partition of $X \setminus \hatX$.  Since $|X_j'| = |\Xopt_j|$ for all $j \in [k]$ and the function $\chck$ only depends on the cardinality of the sets in the partition, $(X_1', \ldots, X_k')$ is a feasible partition (under $\chck$) of $X \setminus \hatX$.   In the optimal solution for $\cI$, let $\fopt_1, \ldots, \fopt_k$ be the $k$ centers corresponding to the clusters $\Xopt_1, \ldots, \Xopt_k$ respectively. Now, 
      \begin{align}
          \label{eq:step1}
          \opt(\cI'') & \leq \cost(X_1', \ldots, X_k') 
           \leq \sum_{j \in [k]} \sum_{x \in X_j'} D^z(x,\fopt_j) 
      \end{align}
      For each $j \in [k]$, we estimate the quantity $\sum_{x \in X_j'} D^z(x,\fopt_j)$. Using the definition of $X_j'$ and triangle inequality, this quantity can be expressed as 
      \begin{align}
          \sum_{x \in \Xopt_j \setminus Z_j} D^z(x,\fopt_j) + \sum_{x \in \hmu^{-1}(Z_j)} D^z(x, \fopt_j) & \leq 
          \sum_{x \in \Xopt_j \setminus Z_j} D^z(x,\fopt_j) + \sum_{x \in \hmu^{-1}(Z_j)} \left(D(x, \hmu(x)) + D(\hmu(x), \fopt_j) \right)^z
          \label{eq:step2}
      \end{align}
      When  $z=1$, the above is at most (replacing $x$ by $\hmu(x)$ in the second expression on RHS)
      $$ \sum_{x \in \Xopt_j} D(x,\fopt_j) + \sum_{x \in Z_j} D(x, \hmu(x)). $$
      Using this bound in~\eqref{eq:step1}, we see that 
      $$ \opt(\cI'') \leq \opt(\cI) + \sum_{x \in \Xopt_0} D(x,\hmu(x)) \leq (1+\varepsilon) \opt(\cI),$$
      where the last inequality follows from~\Cref{lem:hmu}. This proves the desired result for $z=1$. 
      When $z=2$, we use the fact that for any two reals $a,b$, 
      $$(a+b)^2 \leq (1+\sqrt{\varepsilon}) a^2 + b^2 \left( 1 + \frac{1}{\sqrt{\varepsilon}} \right).$$
      Using this fact, the expression in the RHS of~\eqref{eq:step2} can be upper bounded by 
      $$(1+\sqrt{\varepsilon}) \sum_{x \in \Xopt_j} D^2(x,\fopt_j) + \left( 1 + \frac{1}{\sqrt{\varepsilon}} \right) \sum_{x \in Z_j} D^2(x, \hmu(x)).$$
      Substituting this expression in~\eqref{eq:step1} and using~\Cref{lem:hmu}, we see that 
      $$ \opt(\cI'') \leq (1+\sqrt{\varepsilon}) \opt(\cI) + 4m \sqrt{\varepsilon} \opt(\cI).$$
      This proves the desired result for $z=2$.  
\end{proof}

The approximation preserving properties of Theorem~\ref{thm:main} follow from the above analysis. For the \kmean problem, since the approximation term is $(1+\sqrt{\veps}(4m+1))$, we can replace $\veps$ with $\veps^2/(4m+1)^2$ in the algorithm and analysis to obtain a $(1+\veps)$ factor. Let us quickly check the running time of the algorithm. The algorithm first runs $\mathcal{C}$ that takes $T_{\mathcal{C}}(n)$ time. 
This is followed by $D^z$-sampling $O(\frac{m^{z+1}\log{m}}{\veps^z})$ points, which takes $O(n \cdot (k + \frac{m^{z+1}\log{m}}{\veps^z}))$ time. 
The number of iterations of the for-loops is determined by the number of subsets of $S$, which is $\sum_{i=0}^{m} \binom{|S|}{i} = \left( \frac{m}{\veps}\right)^{O(m)}$, and the number of possibilities for $\tau$, which is at most $\binom{2m+k-1}{m} = (m+k)^{O(m)}$. This gives the number of iterations $q = f(k, m, \veps) = \left(\frac{k+m}{\veps} \right)^{O(m)}$. 
In every iteration, in addition to running $\mathcal{A}$, we solve a weighted b-matching problem on a bipartite graph $(L\cup R, E)$ where $R$ has $(k+m)$ vertices (corresponding to the $k+m$ centers in the center set $C$) and $L$ has at most $(k+m)\cdot m$ vertices (considering $m$ closest clients for every center is sufficient which can be found using a pre-processing step). So, every iteration costs $T_{\mathcal{A}}(n) + O((k+m)^3m^2)$ time. This gives the running time expression in Theorem~\ref{thm:main}.

\subsection*{Extension to labelled version}\label{sec:labelled}
In this section, we extend~\Cref{algo:cluster} to the setting where points in $X$ have labels. Recall that the label assignment is specified by a mapping $\sigma: X \rightarrow L,$ where $L$ is the set of labels. Given a partitioning $X_0, X_1, \ldots, X_k$ and locations $f_1, \ldots, f_k$,
$\chck(X_0, \ldots, X_k, f_1, \ldots, f_k, \sigma)$  depends on $|\sigma^{-1}(l) \cap X_j|$ for each $l \in L, j \in [k]$.  Analogous to the assumptions for the unlabelled version, we assume the existence of the algorithms $\cal C$, $\cal A$ and $\cal M$. Note that the algorithm $\cal A$ for the outlier-free version takes as input a tuple $(X',F',k,\sigma, \chck, \cost)$.

The overall structure of~\Cref{algo:cluster} remains unchanged.  However, after guessing the subset $Y$, we need to guess not only the number of outlier points close to each of the centers in $C$, but also the labels of such points. 
Motivated by this, we give the following key definition. 
Given a non-negative integer $p$, a {\em label partition} of $p$ is defined as a tuple $\psi = (q_1, \ldots, q_{|L|})$ such that $\sum_i q_i = p$. The intuition is that given a set $S$ of size $p$, $q_1$ points get the first label in $L$, $q_2$ points in $S$ get the second label in $L$, and so on. 

The algorithm is described in~\Cref{algo:clusterlabel}. As before, we execute $\cal C$ on the underlying unconstrained instance $\cI'$ to obtain a set $C$ of $k+m$ centers (line~\ref{l:Cl}), and then sample the subset $S$ using $D^z$-sampling (line~\ref{l:D2l}). 
 Now, given a subset $Y$ of $S$, define {\em a valid tuple} $\btau$ w.r.t. $Y$ as a tuple 
$((t_1, \psi_1), \ldots, (t_{k+m}, \psi_{k+m})),$ where 
(i) $\sum_j t_j + |Y| =m$, and (ii) $\psi_j$ is a label partition of $t_j$. As in line~\ref{l:T} in~\Cref{algo:cluster}, we iterate over all such valid tuples in line~\ref{l:Tl}. The definition
of the instance $\cI^{(Y, \tau)}$ changes as follows: as before, we construct a bipartite graph with the set of $(k+m)$ centers $C$ on one side and $X$ on the other side. The weight of an edge between a center $v \in C$ and point $w \in X$ is set to $D^z(v,w)$.  
Let $\psi_j:=(n_j^1, \ldots, n_j^{\ell})$, where $\ell = |L|$. 
But now,  a solution to the instance $\cI^{(Y, \tau)}$ needs to satisfy the condition that for each point $c_j \in C$ and each label $l \in L$, exactly $n_j^l$ points in $X$ are matched to $c_j$. Note that this also implies that exactly $t_j$ points are matched to $c_j$. This matching problem can be easily reduced to \lmatch -- make $|L|$ copies of each vertex $c_j$. Call these copies $c_j^1, \ldots, c_j^{\ell}$. Now for each $l \in [\ell]$, we add edges from $c_j^l$ to points in $X$ whose label is $l$. Thus, the edges going out of $c_j$ in the original graph get partitioned into $\ell$ groups. Now we require that $c_j^l$ is matched to exactly $n_j^l$ vertices. 
The rest of the details of~\Cref{algo:clusterlabel} are same as that of~\Cref{algo:cluster}. Note that the running time of the algorithm changes because we now have to iterate over all partitions of each of the numbers $t_j$.

\begin{algorithm}[H]
  \caption{Algorithm for outlier constrained clustering when points have labels.}
  \label{algo:clusterlabel}
    {\bf Input:}  $\cI := (X,F,k,m, \sigma, \chck, \cost)$\;
     Execute $\C$ on the instance $\cI' := (X,F,k+m)$ to obtain a set $C$ of $k+m$ centers. \label{l:Cl} \;
     Sample a set $S$ of $\lceil \frac{4\beta m \log m}{\varepsilon} \rceil$ points with replacement, each using $D^z$-sampling from $X$ w.r.t. $C$. \label{l:D2l} \;
      \For{ each subset $Y \subset S, |Y| \leq m$}{
          \For {each valid tuple $\btau = ((t_1,\psi_1), \ldots, (t_{k+m},\psi_{k+m}))$ w.r.t. $Y$ \label{l:Tl}} { 
              Construct the instance $\cI^{(Y,{\bf \tau})}$  \label{l:matchl} \;
               Run $\M$ on $\cI^{(Y,{\bf \tau})}$ and let $X_0^{(Y, \btau)}$ be the set of matched points in $X$. \label{l:Ml} \;
               Run the algorithm $\A$ on the instance $(X \setminus (X_0^{(Y, \btau)} \cup Y), F, k,\sigma, \chck, \cost)$. \label{l:restl} \;
               Let $(X^{(Y,\btau)}_1, \ldots, X^{(Y,\btau)}_k)$ be the clustering produced by $\A$. \label{l:outll} \;
          }
    }
     Let $(Y^\star, \btau^\star)$ be the pair for which $\cost(X^{(Y,\btau)}_1, \ldots, X^{(Y,\btau)}_k)$ is minimized. \label{l:pickl} \;
      {\bf Output} $(X^{(Y^\star,\btau^\star)}_0, X^{(Y^\star,\btau^\star)}_1, \ldots, X^{(Y^\star,\btau^\star)}_k)$. \label{l:out-label}
\end{algorithm}

 The analysis of the algorithm proceeds in an analogous manner as that of~\Cref{algo:cluster}. We just need to consider the iteration of the algorithm, where we correctly guess the size of each of the sets $\Xopt_{N,j}$ and the number of points of each label in this set.

\section{Conclusion and Open Problems}
In this work, we give an approximation preserving reduction from the outlier version of the $k$-means/median problems to their outlier-free versions. The main idea is to find a list of possibilities $O_1, ..., O_q$ for the $m$ outlier points from the dataset $X$, solve the outlier-free version on instances $X\setminus O_1$, ..., $X\setminus O_q$ and pick the best solution. Note that there is a trivial reduction with $q = \binom{n}{m}$ ({\em try all choices of $m$ outliers from the dataset $X$, with $n= |X|$}), which is prohibitively large. To obtain a much smaller set of possibilities, which is independent of the data size $n$, we try to find suitable replacements for the $m$ optimal outlier points. The issue is that such a replacement should not increase the optimal cost of the outlier version too much. To ensure this, we find a center set $C$ with $(k+m)$ centres that approximates the cost of the outlier version. The key insight is that the optimal outliers that are far away from $C$ can be found using distance-based sampling, and for the ones that are sufficiently close to $C$ ({\em we have a good handle on the closeness since $C$ approximates the cost}), we find replacement points that increase the optimal cost by a small amount.

Moreover, our reduction works within a very general framework for modeling constrained versions of these clustering problems, which enables us to obtain approximation results for a wide range of constrained clustering problems. One future direction is to explore which other constrained clustering problems fit our framework and, hence, can benefit from our reduction. The more important question is related to the efficiency of the reduction. The key quantity in our reduction is $q \coloneqq f(k, m, \veps) = \left( \frac{k+m}{\veps}\right)^{O(m)}$, the number of instances of the outlier-free problem that must be solved to obtain a good solution for the outlier version. Whether this can be improved and to what extent it can be improved are interesting open problems.



\addcontentsline{toc}{section}{References}
\bibliography{references}

\appendix 

\section{Tables}
\begin{table}
\begin{adjustbox}{width=\columnwidth,center}
\centering
\setcellgapes{1ex}\makegapedcells
\begin{tabular}{|l|l|}
\hline
{\bf Problem} & {\bf Description} \\ \hline
\makecell[l]{Unconstrained $k$-median \\ ({\it Constraint type: unconstrained})} & \makecell[l]{{\it Input}: $(F, X, k)$ \\ {\it Output}: $(X_1, ..., X_k, f_1, ..., f_k)$ \\ {\it Constraints}: None, i.e., $\chck(X_1, ..., X_k, f_1, ..., f_k)$ always equals 1.\\ {\it Objective}: Minimise $\sum_{i} \sum_{x \in X_i} D(x, f_i)$. \\ (This includes various versions corresponding to specific metrics such as \\
\ \ Ulam metric on permutations, metric spaces with constant doubling dimension etc.)} \\ \hline
\makecell[l]{Fault-tolerant $k$-median  \\ ({\it Constraint type: unconstrained}\\\ {\it but labelled}) \\\\ \cite{fault:kmedian_2014_non_uniform_haji_li_SODA,fault:outlier_kmedian_2020_Varadarajan}} & \makecell[l]{{\it Input}: $(F, X, k)$ and a number $h(x) \leq k$ for every facility $x \in X$  \\ {\it Output}: $(f_1, ..., f_k)$ \\ {\it Constraints}: None.\\ {\it Objective}: Minimise $ \sum_{x \in X} \sum_{j = 1}^{h(x)} D(x, f_{\pi_x(j)})$, \\
\hspace*{0.6in} where $\pi_x(j)$ is the index of $j^{th}$ nearest center to $x$ in $(f_1, ..., f_k)$\\ ({\it Label}: $h(x)$ may be regarded as the label of the client $x$. So, the number of distinct labels $\ell \leq k$.)} \\ \hline
\makecell[l]{Balanced $k$-median \\ ({\it Constraint type: size}) \\\\ \cite{rgather:k_center_2010_Aggarwal,ding20}} & \makecell[l]{{\it Input}: $(F, X, k)$ and integers $(r_1, ..., r_k)$, $(l_1, ..., l_k)$,  \\ {\it Output}: $(X_1, ..., X_k, f_1, ..., f_k)$ \\ {\it Constraints}: $X_i$ should have at least $r_i$ and at most $l_i$ clients, \\
\hspace*{0.75in} i.e., $\chck(X_1, ..., X_k, f_1, ..., f_k) = 1$ iff $\forall i, r_i \leq |X_i| \leq l_i$ .\\ {\it Objective}: Minimise $\sum_{i} \sum_{x \in X_i} D(x, f_i)$. \\
(Versions corresponding to specific values of $r_i$'s and $l_i$'s are known by different names. \\
\ \ The version corresponding to $l_1 = ... = l_k = |X|$ is called the $r$-gather problem and \\
\ \ the version where $r_1= ... = r_k = 0$ is called the $l$-capacity problem.)} \\ \hline
\makecell[l]{Capacitated $k$-median \\ ({\it Constraint type: center + size}) \\\\ \cite{cl19}} & \makecell[l]{{\it Input}: $(F, X, k)$ and with capacity $s(f)$ for every facility $f \in F$  \\ {\it Output}: $(X_1, ..., X_k, f_1, ..., f_k)$ \\ {\it Constraints}: The number of clients, $X_i$, assigned to $f_i$ is at most $s(f_i)$, \\
\hspace*{0.75in} i.e., $\chck(X_1, ..., X_k, f_1, ..., f_k) = 1$  iff $\forall i, |X_i| \leq s(f_i)$ .\\ {\it Objective}: Minimise $\sum_{i} \sum_{x \in X_i} D(x, f_i)$.} \\ \hline
\makecell[l]{Matroid $k$-median \\ ({\it Constraint type: center}) \\\\ \cite{kknss11,vincent_hardness_2019}} & \makecell[l]{{\it Input}: $(F, X, k)$ and a Matroid on $F$  \\ {\it Output}: $(X_1, ..., X_k, f_1, ..., f_k)$ \\ {\it Constraints}: $\{f_1, ..., f_k\}$ is an independent set of the Matroid, \\
\hspace*{0.75in} i.e., $\chck(X_1, ..., X_k, f_1, ..., f_k) = 1$  iff $\{f_1, ..., f_k\}$ is an independent set of the Matroid .\\ {\it Objective}: Minimise $\sum_{i} \sum_{x \in X_i} D(x, f_i)$.} \\ \hline
\makecell[l]{Strongly private $k$-median \\ ({\it Constraint type: label + size}) \\\\ \cite{rosner18}} & \makecell[l]{{\it Input}: $(F, X, k)$ and numbers $(l_1, ..., l_w)$. Each client has a label $\in \{1, ..., w\}$.   \\ {\it Output}: $(X_1, ..., X_k, f_1, ..., f_k)$ \\ {\it Constraints}: Every $X_i$ has at least $l_j$ clients with label $j$, \\ 
\hspace*{0.75in} i.e., $\chck(X_1, ..., X_k, f_1, ..., f_k) = 1$  iff $\forall i, j, |X_i \cap S_j| \geq l_j$, \\
\hspace*{0.75in} where $S_j$ is the set of clients with label $j$ .\\ {\it Objective}: Minimise $\sum_{i} \sum_{x \in X_i} D(x, f_i)$. \\ ({\it Labels}: The number of distinct labels $\ell = w$).} \\ \hline
\makecell[l]{$l$-diversity $k$-median \\ ({\it Constraint type: label + size}) \\\\ \cite{bercea19}} & \makecell[l]{{\it Input}: $(F, X, k)$ and a number $l>1$. Each client has one colour from $\in \{1, ..., w\}$   \\ {\it Output}: $(X_1, ..., X_k, f_1, ..., f_k)$ \\ {\it Constraints}: The fraction of clients with colour $j$ in every $X_i$ is at least $1/l$, \\ 
\hspace*{0.75in} i.e., $\chck(X_1, ..., X_k, f_1, ..., f_k) = 1$  iff $\forall i, j, |X_i \cap S_j| \leq |X_i|/l$, \\
\hspace*{0.75in} where $S_j$ is the set of clients with colour $j$ .\\ {\it Objective}: Minimise $\sum_{i} \sum_{x \in X_i} D(x, f_i)$. \\ ({\it Labels}: Each colour can be regarded as a label and hence the number of distinct labels $\ell = w$).} \\ \hline
\makecell[l]{Fair $k$-median \\ ({\it Constraint type: label + size}) \\\\ \cite{bercea19,bera19}} & \makecell[l]{{\it Input}: $(F, X, k)$ and fairness values $(\alpha_1, ..., \alpha_w), (\beta_1, ..., \beta_w)$. Each client has colours from $\in \{1, ..., w\}$   \\ {\it Output}: $(X_1, ..., X_k, f_1, ..., f_k)$ \\ {\it Constraints}: The fraction of clients with colour $j$ in every $X_i$ is between $\alpha_j$ and $\beta_j$, \\
\hspace*{0.75in} i.e., $\chck(X_1, ..., X_k, f_1, ..., f_k) = 1$ \\ \hspace*{0.8in} iff $\forall i, j, \alpha_j |X_i| \leq |X_i \cap S_j| \leq \beta |X_i|$, where $S_j$ is the set of clients with colour $j$ .\\ {\it Objective}: Minimise $\sum_{i} \sum_{x \in X_i} D(x, f_i)$. \\ (There are two versions: (i) each client has a unique label, and (ii) a client can have multiple labels.)\\ ({\it Labels}: For the first version $\ell=w$ and for the second version $\ell = 2^w$.)} \\ \hline
\end{tabular}
\end{adjustbox}
\caption{The table defines various outlier-free versions of the constrained $\kmed$ problem. The $\kmean$ versions are defined similarly using $D^2$ instead of $D$. We include a few references. The problems are categorized based on the {\it type} of constraints. There are three main types of constraints (i) {\it size} (constraints on the cluster size), (ii) {\it center} (constraints on the points a center can service), and (iii) {\it label} (constraints on the label of points in clusters). A constrained problem can have a combination of these constraint types.}\label{table:1}
\end{table}

\begin{table}
\begin{adjustbox}{width=\columnwidth,center}
\centering
\setcellgapes{1ex}\makegapedcells
\begin{tabular}{|l|l|l|l|l|}
\hline
\multirow{2}{*}{\bf Problem} & \multirow{2}{*}{\bf Outlier-free} & \multicolumn{3}{l|}{\bf ~~~~~~~ Outlier version}\\ \cline{3-5}
& & \makecell[c]{\cite{gjk20}} & \makecell[c]{\cite{aisx23}} & {\bf This work} \\ \hline
\makecell[l]{Euclidean $k$-means (i.e., $F=\mathbb{R}^d, X \subset \mathbb{R}^d$)} & \makecell[c]{$(1+\veps)$ \\ \cite{bjk18}} &  \makecell[c]{$\times$} & \makecell[c]{$(1+\veps)$} & \makecell[c]{$(1+\veps)$} \\ \hline
\makecell[l]{$k$-median} & \makecell[c]{$\left(1+\frac{2}{e} + \veps \right)$ \\ \cite{vincent_hardness_2019}} &  \makecell[c]{$(3+\veps)$} & \makecell[c]{$\left(1 + \frac{2}{e} + \veps\right)$} & \makecell[c]{$\left(1 + \frac{2}{e} + \veps\right)$} \\ \hline
\makecell[l]{$k$-means} & \makecell[c]{$\left(1+\frac{8}{e} + \veps \right)$ \\ \cite{vincent_hardness_2019}} &  \makecell[c]{$(9+\veps)$} & \makecell[c]{$\left(1 + \frac{8}{e} + \veps\right)$} & \makecell[c]{$\left(1 + \frac{8}{e} + \veps\right)$} \\ \hline
\makecell[l]{$k$-median/means in metrics: \\ (i) constant doubling dimension\\(ii) metrics induced by graphs of bounded treewidth \\ (iii) metrics induced by graphs that exclude a fixed \\ graph as a minor} & \makecell[c]{$\left(1+ \veps \right)$ \\ \cite{css21}} &  \makecell[c]{\colorbox{black!10}{$(3+\veps)$} \\ $k$-median \\ \colorbox{black!10}{$(9+\veps)$} \\ $k$-means} & \makecell[c]{$\left(1 + \veps\right)$} & \makecell[c]{$\left(1 + \veps\right)$} \\ \hline
\makecell[l]{Matroid $k$-median} & \makecell[c]{$\left(2+ \veps \right)$ \\ \cite{vincent_hardness_2019}} &  \makecell[c]{\colorbox{black!10}{$(3+\veps)$}} & \makecell[c]{$\left(2 + \veps\right)$} & \makecell[c]{$\left(2 + \veps\right)$} \\ \hline
\makecell[l]{Colourful $k$-median} & \makecell[c]{$\left(1+ \frac{2}{e} + \veps \right)$ \\ \cite{vincent_hardness_2019}} &  \makecell[c]{\colorbox{black!10}{$(3+\veps)$}} & \makecell[c]{$\left(1 + \frac{2}{e} + \veps\right)$} & \makecell[c]{$\left(1 + \frac{2}{e} +  \veps\right)$} \\ \hline
\makecell[l]{Ulam $k$-median (here $F=X$)} & \makecell[c]{$\left(2 - \delta \right)$ \\ \cite{cdk23}} & \makecell[c]{\colorbox{black!10}{$(2+\veps)$}} & \makecell[c]{$\times$} & \makecell[c]{\colorbox{RoyalBlue!30}{$\left(2-\delta\right)$}} \\ \hline
\makecell[l]{Euclidean Capacitated $k$-median/means} & \makecell[c]{$\left(1 + \veps \right)$ \\ \cite{cl19}} & \makecell[c]{$\times$} & \makecell[c]{$\times$} & \makecell[c]{\colorbox{RoyalBlue!30}{$\left(1+\veps\right)$}} \\ \hline
\makecell[l]{Capacitated $k$-median\\Capacitated $k$-means} & \makecell[c]{$\left(3 + \veps \right)$\\ $(9+\veps)$ \\ \cite{cl19}} & \makecell[c]{$\times$\\$\times$} & \makecell[c]{$\times$\\$\times$} & \makecell[c]{\colorbox{RoyalBlue!30}{$\left(3+\veps\right)$}\\\colorbox{RoyalBlue!30}{$(9+\veps)$}} \\ \hline
\makecell[l]{Uniform/non-uniform $r$-gather $k$-median/means\\(uniform implies $r_1=r_2=...=r_k$)} & \makecell[c]{} &  \makecell[c]{} & \makecell[c]{} & \makecell[c]{} \\ 
\makecell[l]{Uniform/non-uniform $l$-capacity $k$-median/means\\(uniform implies $l_1=l_2=...=l_k$)} & \makecell[c]{} & \makecell[c]{} & \makecell[c]{} & \makecell[c]{} \\ 
\makecell[l]{Uniform/non-uniform balanced $k$-median/means\\(uniform implies $r_1=r_2=...=r_k$ and $l_1=l_2=...=l_k$)} & \makecell[c]{$(3+\veps)$ \\ ($k$-median)} & \makecell[c]{$(3+\veps)$\\ ($k$-median)} & \makecell[c]{$\times$} & \makecell[c]{$(3+\veps)$ \\ ($k$-median)} \\ 
\makecell[l]{Uniform/non-uniform fault tolerant $k$-median/means\\ (uniform implies same $h(x)$ for every $x$)} & \makecell[c]{$(9+\veps)$ \\ ($k$-means)} & \makecell[c]{$(9+\veps)$ \\ ($k$-means)} & \makecell[c]{$\times$} & \makecell[c]{$(9+\veps)$ \\ ($k$-means)} \\ 
\makecell[l]{Strongly private $k$-median/means} & \makecell[c]{\cite{gjk20}} &  \makecell[c]{} & \makecell[c]{} & \makecell[c]{} \\ 
\makecell[l]{$l$-diversity $k$-median/means} & \makecell[c]{} &  \makecell[c]{} & \makecell[c]{} & \makecell[c]{} \\ 
\makecell[l]{Fair $k$-median/means} &  &  \makecell[c]{} & \makecell[c]{} & \makecell[c]{} \\ \hline
\end{tabular}
\end{adjustbox}
\caption{A $\times$ means that the techniques are not known to apply to the problem. The new results that do not follow from the previously known results are shaded \colorbox{RoyalBlue!30}{~}. The results that were not explicitly reported but follow from the techniques in the paper are shaded \colorbox{black!10}{~}
The techniques of \cite{aisx23} do not apply to the Ulam $k$-median problem since the outlier-free algorithm works on unweighted instances. Note that all the FPT $(3+\veps)$ and $(9+\veps)$ approximations for the outlier-free versions (leftmost column) in the last row follow from the outlier-free results in \cite{gjk20}. However, the approximation guarantees in the rightmost column depend on those in the leftmost. This means, unlike the rigid $(3+\veps)$ and $(9+\veps)$ approximation of \cite{gjk20} in the middle column, the approximation guarantee in the rightmost column will improve with every improvement in the leftmost.}\label{table:2}
\end{table}

\end{document}